\documentclass[superscriptaddress,aps,prx,9pt,twocolumn]{revtex4-2}

\usepackage[T1]{fontenc}
\usepackage{amsmath,amssymb,mathtools,amsthm}
\usepackage{newtxtext}
\usepackage{newtxmath}
\usepackage{braket}
\usepackage{booktabs,colortbl,array,tabularx}
\usepackage[dvipsnames,table]{xcolor}
\usepackage[colorlinks,citecolor=blue,linkcolor=blue,urlcolor=blue]{hyperref}
\usepackage{graphicx}
\usepackage{tikz}
\usetikzlibrary{decorations.pathreplacing,arrows.meta,shapes.geometric,shapes.symbols,shapes.misc,fit}
\usepackage{subfig}
\usepackage{nicematrix}
\newcommand{\safeincludegraphics}[2][]{%
  \IfFileExists{#2}{\includegraphics[#1]{#2}}{%
    \fbox{\parbox[c][3.2cm][c]{0.9\linewidth}{\centering
    Figure file not found:\\[2pt]\texttt{\detokenize{#2}}}}}}

\newcommand{\LBonly}[1]{\textcolor{BrickRed}{\mathbf{#1}^{*}}}
\DeclareMathOperator{\Tr}{Tr}

\newtheorem{theorem}{Theorem}

\newtheorem{corollary}{Corollary}
\newtheorem{remark}{Remark}

\begin{document}

\title{Optimal Dynamic Cooling of Multiple Qubits}

\author{Mattia Reda}
\affiliation{Dipartimento di Fisica ``A. Volta,'' Universit\`a di Pavia, via Bassi 6, 27100 Pavia, Italy}
\author{Massimiliano Sacchi}
\affiliation{CNR -- Istituto di Fotonica e Nanotecnologie, Piazza Leonardo da Vinci 32, I-20133 Milano, Italy}
\affiliation{Dipartimento di Fisica ``A. Volta,'' Universit\`a di Pavia, via Bassi 6, 27100 Pavia, Italy}
\author{Chiara Macchiavello}
\affiliation{Dipartimento di Fisica ``A. Volta,'' Universit\`a di Pavia, via Bassi 6, 27100 Pavia, Italy}
\author{Giacomo Guarnieri}
\affiliation{Dipartimento di Fisica ``A. Volta,'' Universit\`a di Pavia, via Bassi 6, 27100 Pavia, Italy}
\affiliation{INFN Sezione di Pavia, Via Agostino Bassi 6, I-27100 Pavia, Italy}

\begin{abstract}
We solve the closed-system problem of cooling $M$ qubits, selected
from $N$ identical thermal qubits, to the lowest common local
temperature allowed by unitarity. The optimal protocol consists of two
conceptually distinct steps. First, a passive rearrangement assigns
the largest eigenvalues of the initial state to target sectors of 
lowest Hamming weight, thereby minimizing the total target
energy. Second, a target-only complex-Hadamard
transformation within each fixed-Hamming-weight subspace equalizes the
one-qubit target marginals without changing any target-sector
probability or the total energy. Consequently, imposing a common local
temperature costs neither cooling depth nor additional work: the
constrained optimum coincides with the unconstrained passive minimum
for every $N>M$ and every initial temperature. The complex-Hadamard
correction may nevertheless be costly at the circuit level. We
therefore derive an exact arithmetic criterion for when the same
optimum can be attained by a temperature-independent
computational-basis permutation alone, and exhaustively classify the
resulting finite-size islands of feasibility for $M+2\leq
N\leq128$. At isolated temperatures, further optimal permutations can
arise through numerical cancellations between different thermal
eigenvalue shells. These alternative realizations may reduce
implementation complexity, but they cannot improve the cooling curve
already attained by the universal protocol. We also derive the exact
cooling curve, prove that at least two ancillary qubits are necessary
and sufficient for nontrivial cooling, and show that joint many-target
cooling can strictly outperform parallel single-target
strategies.
\end{abstract}

\maketitle

\section{Introduction}
\label{sec:introduction}
The ability to initialize qubits in a pure fiducial state is a basic
requirement for quantum information
processing~\cite{DiVincenzo2000,NielsenChuang}. Fault-tolerant
architectures impose a stronger and recurring demand: repeated
syndrome extraction consumes a continuing supply of fresh ancillary
qubits that must be prepared with comparable
fidelity~\cite{Terhal2015,Preskill2018}. Preparing several equally
pure qubits is therefore not merely an initialization step, but a
thermodynamic resource problem that recurs throughout a computation.

A complementary route to direct refrigeration is to cool by logic. A
global unitary can redistribute the fixed spectrum of a finite
register, concentrating entropy and energy in part of the system while
purifying the remaining degrees of freedom. This idea arose in
nuclear-magnetic-resonance quantum information processing, where
polarization compression was introduced to enhance weak spin
signals~\cite{Sorensen1989,SchulmanVazirani1999,Boykin2002}. In the
high-temperature regime relevant to those experiments, entropy
conservation severely limits the polarization gain of closed-system
compression. This limitation motivated heat-bath algorithmic cooling,
in which selected degrees of freedom are repeatedly reset and reused
to export
entropy~\cite{Schulman2005,RaeisiMosca2015,RodriguezBriones2017,Alhambra2019,Park2016,Lin2024}. In
the present work we instead consider the closed-system setting,
commonly called \emph{dynamic cooling}: the register is isolated
during the protocol and the transformation is a global unitary. The
available resources are therefore only the initial thermal bias, a
finite number of qubits, and coherent control.

This closed-system question has acquired renewed relevance as
energetic constraints, finite resources, and state purification have
become central themes in quantum
thermodynamics~\cite{Goold2016,Binder2018,DeffnerCampbell2019}. It
also provides a finite-register perspective on the cost and ultimate
limitations of cooling, including the relation between information
erasure, the third law, and attainable refrigeration
bounds~\cite{Taranto2023,MasanesOppenheim2017,Clivaz2019,Taranto2025}.
Recently, Ref.~\cite{BassmanOftelie2024} solved the optimal
dynamic-cooling problem for one target qubit drawn from a register of
identical thermal qubits, revealing a crossover from the familiar
high-temperature behavior to a substantially stronger low-temperature
enhancement. Subsequently Ref.~\cite{Xuereb2025} characterized the
optimal single-target transformation for nonidentical machine qubits
and established structural criteria for useful cooling machines.

Dynamic cooling belongs to a wider landscape of quantum refrigeration
and purification protocols. Autonomous refrigerators show how small
quantum machines can transfer heat without an externally timed cooling
stroke~\cite{Linden2010,LevyKosloff2012,Mitchison2019}, and autonomous
reset has now been demonstrated for a superconducting
qubit~\cite{Aamir2025}. Other related directions include simultaneous
bath-assisted cooling~\cite{Krishnan2024}, purification and
information-erasure protocols implemented on quantum
processors~\cite{Solfanelli2022,Buffoni2023}, and software for
compiling computational-cooling circuits~\cite{Difranco2024}. These
developments make it natural to ask not only how cold one qubit can be
made, but how a finite register can supply several equally cold qubits
at once.

The many-target problem is not a routine extension of the
single-target problem. It introduces two distinct issues, one
structural and one operational. Structurally, minimizing the total
energy of $M$ target qubits does not ensure that the targets form a
uniform resource: different targets may acquire different local
populations. The relevant optimization therefore combines two
requirements:
\begin{enumerate}
    \item the total target energy must be minimal among all global unitaries;
    \item all target qubits must have the same diagonal marginal and hence the same local temperature.
\end{enumerate}
The first requirement is a passive-rearrangement problem closely related to ergotropy and passive states~\cite{Allahverdyan2004,Pusz1978,Lenard1978}. The second is a symmetry constraint within the highly degenerate target-energy sectors. Treating these requirements separately is the key to the solution.

Our main result is a universal optimal protocol valid for every $N>M$
and every initial temperature. A passive permutation first assigns the
largest eigenvalues of the initial state to the target-energy sectors
of lowest Hamming weight, as prescribed by the rearrangement
inequality. A target-only complex-Hadamard transformation, chosen for
concreteness as a discrete Fourier matrix in each fixed-weight
subspace, then coherently averages the target strings of equal Hamming
weight. This symmetrization leaves every target-sector probability
unchanged, commutes with the total Hamiltonian, and makes all
one-qubit target marginals identical and diagonal. Consequently, the
optimum subject to the common-temperature constraint coincides exactly
with the unconstrained passive minimum: demanding a uniform cooled
register costs neither cooling depth nor additional work.

This statement is stronger than the construction of a convenient
protocol. The passive spectral bound applies to every global unitary
and constrains the mean target excitation. Our protocol saturates that
bound while distributing the optimum equally among the targets. Hence
no unitary can make every target colder than the common temperature
attained here, even if equality of the marginals is not imposed. In
this precise sense, equal-temperature cooling is not a compromise
relative to the unconstrained problem; it is an attainable
representative of its optimum.

The universal construction also separates thermodynamic performance
from implementation complexity and robustness. The complex-Hadamard
correction works for every pair $(N,M)$ and every temperature, but a
coherent implementation may be circuit-expensive. We therefore
determine exactly when a permutation alone can attain the same optimum
for all initial temperatures. A simple divisibility test identifies
the transparent subclass of shell-by-shell Hamming-symmetric
permutations. This condition is sufficient but not exhaustive. The
exact criterion is instead \emph{shell-wise one-body balance}: within
every thermal eigenvalue shell, each target position must receive the
same total incidence. This weaker requirement admits asymmetric block
assignments. We classify all resulting islands of feasibility in the
range $M+2\leq N\leq128$ and identify a further hierarchy of
temperature-specific permutations produced by numerical cancellations
between different shells. Such alternatives may reduce circuit
complexity on a calibrated platform, but they do not change the
cooling limit and, when temperature specific, sacrifice robustness to
drift.

The classification also reveals an arithmetic structure that has no
single-target analogue. The absence of a permutation-only realization
is not caused by insufficient Hilbert-space capacity: the passive
sectors always contain the required total number of states. The
obstruction is instead the failure of discrete shell multiplicities to
satisfy the incidence balance imposed by the geometry of the target
strings. This interpretation accounts for the sparse, irregular
islands of feasibility and motivates the odd-$M$ conjecture formulated
below.

Operationally, the single-target solution suggests a simple
alternative: divide the register into $M$ disjoint machines and cool
one target in each. We show that this parallel construction loses the
cooperative advantage of a joint transformation. Any parallel protocol
is itself a restricted global unitary and therefore cannot outperform
the global optimum. More sharply, the exact ancillary-resource
threshold implies a regime in which the distinction is absolute:
throughout $M+2\leq N\leq3M-1$, the global protocol cools whereas
equal parallel single-target machines cannot cool at all. At fixed
$M$, the ratio of the global and parallel low-temperature cooling
factors approaches $M$ as $N$ grows. Simultaneous cooling is therefore
a genuinely cooperative finite-resource phenomenon.

Beyond universal optimality, we derive the exact cooling curve over
the full temperature range, prove the low-temperature law $T'\sim
T/\xi_M^\star$, establish that one ancillary qubit can never cool the
target register, and determine a minimum-work representative within
the class of cooling-optimal permutations. The exact curve, the
permutation criteria, and the global-versus-parallel comparison
distinguish what is fixed by thermodynamics from what depends on
implementation.

The paper is organized as follows. Section~\ref{sec:problem}
formulates the passive
optimization. Section~\ref{sec:universal-protocol} constructs the
complex-Hadamard symmetrization and proves universal
optimality. Section~\ref{sec:cooling-curve} gives the exact cooling
curve, its low-temperature limit, and the ancillary-resource
threshold. Sections~\ref{sec:qft-free}--\ref{sec:temperature-dependent}
classify permutation-only implementations, from Hamming-symmetric to
temperature-specific. Sections~\ref{sec:work}
and~\ref{sec:global-parallel} discuss work cost and the cooperative
advantage over parallel protocols. We conclude in
Sec.~\ref{sec:conclusion}; technical proofs, constructions, and
pseudocode are collected in the appendices.
\section{Problem formulation and passive spectral bound}
\label{sec:problem}
\begin{figure}
    \centering
    \begin{tikzpicture}[scale=0.9, every node/.style={scale=1}]
  \definecolor{lightblue}{RGB}{173, 216, 230}
  \definecolor{lightred}{RGB}{255, 150, 150}
  \def\spacing{0.5}
  \def\circleSize{0.4cm}
  \def\shortenVal{0pt} 

  \node[circle, draw, fill=violet!80, minimum size=\circleSize, inner sep=0pt, outer sep=0pt] (qin0) at (0, 0) {};
  \node[circle, draw, fill=violet!80, minimum size=\circleSize, inner sep=0pt, outer sep=0pt] (qin1) at (0, -\spacing) {};
  \node at (0, -2.3*\spacing) {$\vdots$};
  \node[circle, draw, fill=violet!80, minimum size=\circleSize, inner sep=0pt, outer sep=0pt] (qin4) at (0, -4*\spacing) {};
  \node[circle, draw, fill=violet!80, minimum size=\circleSize, inner sep=0pt, outer sep=0pt] (qin5) at (0, -5*\spacing) {};

  \draw[-, shorten >=\shortenVal] (qin0) -- (2, 0);
  \draw[-, shorten >=\shortenVal] (qin1) -- (2, -\spacing);
  \draw[-, shorten >=\shortenVal] (qin4) -- (2, -4*\spacing);
  \draw[-, shorten >=\shortenVal] (qin5) -- (2, -5*\spacing);

  \node[circle, draw, fill=lightblue, minimum size=\circleSize, inner sep=0pt] (qout0) at (6, 0) {};
  \node at (6, -0.75*\spacing) {$\vdots$};
  \node[circle, draw, fill=lightblue, minimum size=\circleSize, inner sep=0pt] (qout2) at (6, -2*\spacing) {};
  \node[circle, draw, fill=red, minimum size=\circleSize, inner sep=0pt] (qout3) at (6, -3*\spacing) {};
  \node at (6, -3.75*\spacing) {$\vdots$};
  \node[circle, draw, fill=red, minimum size=\circleSize, inner sep=0pt] (qout5) at (6, -5*\spacing) {};

  \foreach \i in {0,2,3,5} {
    \draw[-, shorten <=\shortenVal] (4, -\i*\spacing) -- (qout\i);
  }

  \draw[decorate,decoration={brace,amplitude=5pt}, thick]
    (6.4, 0.2) -- (6.4, -2*\spacing - 0.2)
    node[midway, xshift=0.8cm, align=center] {$M$\\ \footnotesize target\\ \footnotesize system};

  \draw[decorate,decoration={brace,amplitude=5pt}, thick]
    (6.4, -3*\spacing + 0.2) -- (6.4, -5*\spacing - 0.2)
    node[midway, xshift=0.8cm, align=center] {$N-M$\\ \footnotesize auxiliary\\ \footnotesize system};

  \draw[fill=white!20] (2, 0.3) rectangle (4, -5*\spacing - 0.3);
  \node at (3, -2.5*\spacing) {\Large $U$};

  \node at (-0.5, -2.5*\spacing) {$T$};
  \node at (6.3, 0.5) {$T'$};

  \pgfmathsetmacro{\legX}{8}
  \pgfmathsetmacro{\legWidth}{0.5}
  \pgfmathsetmacro{\legTop}{0.3}
  \pgfmathsetmacro{\legBottom}{-5*0.5 - 0.3}
  \pgfmathsetmacro{\legMid}{(\legTop + \legBottom)/2}

  \shade[top color=red, bottom color=violet!80]
    (\legX, \legTop) rectangle (\legX+\legWidth, \legMid);
  \shade[top color=violet!80, bottom color=lightblue]
    (\legX, \legMid) rectangle (\legX+\legWidth, \legBottom);

  \draw (\legX, \legTop) rectangle (\legX+\legWidth, \legBottom);

  \node[right, align=left] at (\legX+\legWidth, \legTop) {\footnotesize hot};
  \node[right, align=left] at (\legX+\legWidth, \legBottom) {\footnotesize cold};

\end{tikzpicture}
    \caption{Dynamic cooling of $M$ target qubits. A global unitary $U$ acts on $N$ identical qubits initially at temperature $T$, cooling the targets to a common temperature $T'$ while transferring entropy and energy to the remaining $N-M$ ancillary qubits.}
    \label{Mtarget}
\end{figure}

Figure~\ref{Mtarget} summarizes the setting: a global unitary redistributes energy and entropy between the $M$ targets and the remaining $N-M$ qubits. We now formulate the corresponding spectral optimization. Consider $N$ identical, noninteracting qubits with single-qubit Hamiltonian
\begin{equation}
 H_i=\hbar\omega\left(\ket{1}\!\bra{1}_i-\frac{{\mathbb I}_i}{2}\right).
 \label{eq:single-hamiltonian}
\end{equation}
At inverse temperature $\beta$, each qubit is in the Gibbs state
\begin{align}
  & \tau_\beta=P_0\ket{0}\!\bra{0}+P_1\ket{1}\!\bra{1}, \\
   &P_1=\frac{1}{1+e^{\beta\hbar\omega}},
 \qquad P_0=1-P_1,
 \label{eq:single-thermal-state}
\end{align}
and the initial register state is
\begin{equation}
 \rho(P_1)=\tau_\beta^{\otimes N}.
 \label{eq:initial-state}
\end{equation}
We choose the first $M<N$ qubits as targets and denote the remaining $N-M$ qubits as the ancillary register $A$. The total Hamiltonian is
\begin{equation}
 H_{\rm tot}=\sum_{i=1}^{N}H_i,
 \label{eq:total-hamiltonian}
\end{equation}
and the target excitation-number operator is
\begin{equation}
 K_M=\sum_{\alpha=1}^{M}\ket{1}\!\bra{1}_\alpha.
 \label{eq:target-number}
\end{equation}
Since the target Hamiltonian differs from $\hbar\omega K_M$ only by an additive constant, minimizing the target energy is equivalent to minimizing $\Tr(K_MU\rho U^\dagger)$.

The eigenvalue associated with an initial computational string containing $k$ excitations is
\begin{equation}
 q_k(P_1)=P_0^{N-k}P_1^k,
 \qquad k=0,\ldots,N,
 \label{eq:qk}
\end{equation}
with multiplicity $\binom Nk$. For $0<P_1<1/2$,
\begin{equation}
 q_0>q_1>\cdots>q_N,
 \label{eq:q-order}
\end{equation}
because $q_{k+1}/q_k=P_1/P_0<1$. The shell ordering is therefore independent of the numerical value of the initial temperature. The eigenvalue $w$ of $K_M$ has degeneracy
\begin{equation}
 D_w=\binom Mw2^{N-M},
 \qquad w=0,\ldots,M.
 \label{eq:sector-dimension}
\end{equation}

We now derive the passive spectral bound for arbitrary global unitaries. Let $\lambda_1^\downarrow\geq\cdots\geq\lambda_{2^N}^\downarrow$ be the eigenvalues of $\rho$ in nonincreasing order and let $\varepsilon_1^\uparrow\leq\cdots\leq\varepsilon_{2^N}^\uparrow$ be the eigenvalues of $K_M$ in nondecreasing order, including multiplicities. In eigenbases of $\rho$ and $K_M$, any unitary gives
\begin{align}
 \Tr(K_MU\rho U^\dagger)
 &=\sum_{i,j}\varepsilon_i\lambda_j|U_{ij}|^2 \\
 &=\sum_i\varepsilon_i(D\lambda)_i,
 \label{eq:doubly-stochastic-form}
\end{align}
where $D_{ij}=|U_{ij}|^2$ is doubly stochastic. The objective is linear in $D$. By the Birkhoff--von Neumann theorem~\cite{Birkhoff1946,Brualdi2006}, its minimum over the Birkhoff polytope is attained at a permutation matrix, and the rearrangement inequality~\cite{MarshallOlkin} then gives
\begin{equation}
 \min_U\Tr(K_MU\rho U^\dagger)
 =\sum_{i=1}^{2^N}\varepsilon_i^\uparrow\lambda_i^\downarrow.
 \label{eq:spectral-minimum}
\end{equation}
Thus no coherent unitary can improve on the optimal permutation. We refer to this arrangement as $K_M$-passive, to distinguish it from passivity with respect to the full Hamiltonian.

Let $U_{\rm e}$ be any permutation implementing this ordering, let $\Pi_w$ be the projector onto the target-weight-$w$ sector, and define
\begin{equation}
 S_w(P_1)=\Tr(\Pi_wU_{\rm e}\rho(P_1)U_{\rm e}^\dagger).
 \label{eq:sector-weight}
\end{equation}
The sector totals $S_w$ are fixed by the ordered spectra and are therefore independent of how degenerate eigenvalues are arranged within a given target sector. Equation~\eqref{eq:spectral-minimum} becomes
\begin{equation}
 \min_U\Tr(K_MU\rho U^\dagger)
 =\sum_{w=0}^{M}wS_w(P_1).
 \label{eq:sector-spectral-minimum}
\end{equation}
The unconstrained minimum mean target population is consequently
\begin{equation}
 p_\star(P_1)=\frac{1}{M}\sum_{w=0}^{M}wS_w(P_1).
 \label{eq:pstar}
\end{equation}
The passive sector assignment and the integer multiplicities specifying how many copies of each $q_k$ enter each sector depend only on $(N,M)$, not on $P_1$. This completes the unconstrained mean-energy problem. We now impose identical target marginals without moving probability between target-energy sectors.
\section{Universal optimal equal-temperature protocol}
\label{sec:universal-protocol}
The passive rearrangement of Sec.~\ref{sec:problem} minimizes the \emph{sum}
of the target excitation probabilities. This requirement alone does not,
however, guarantee that the $M$ target qubits reach the same local
temperature. Indeed, a $K_M$-passive permutation fixes the total probability
$S_w$ assigned to each target Hamming-weight sector but leaves freedom in how
that probability is distributed among the different target strings within
the sector. Let $\rho_{\rm e}=U_{\rm e}\rho U_{\rm e}^\dagger$ be the state after an arbitrary passive permutation. Since both the initial state and $U_{\rm e}$ are diagonal in the computational basis, it has the block form
\begin{equation}
 \rho_{\rm e}=\sum_{x\in\{0,1\}^M}\ket{x}\!\bra{x}\otimes A_x,
 \label{eq:block-state}
\end{equation}
where $x_\alpha\in\{0,1\}$ is the bit of target $\alpha$ and $A_x$ is the corresponding positive operator on the ancillary register. The excited-state population of target $\alpha$ is therefore
\begin{equation}
 p'_{1,\alpha}
 =\sum_{x\in\{0,1\}^M}x_\alpha\Tr A_x.
 \label{eq:local-population-before-symmetrization}
\end{equation}
The sector totals determine only the sum of these populations,
$\sum_\alpha p'_{1,\alpha}=\sum_w wS_w$, and do not generally determine them
individually. Denoting by $\rho'_\alpha$ the reduced final state of target qubit $\alpha$, our equal-temperature requirement is therefore
\begin{equation}
 \rho'_1=\rho'_2=\cdots=\rho'_M,
 \label{eq:equal-local-marginals}
\end{equation}
with each marginal diagonal in the local energy basis. Since the target
qubits have identical gaps, Eq.~\eqref{eq:equal-local-marginals} is equivalent
to requiring a single excited-state population, and hence a single local
temperature, for all targets.

An obstruction appears already in the smallest nontrivial cooling task with
two targets and two ancillary qubits, namely $(N,M)=(4,2)$. The passive
assignment leaves the target-weight-one sector with the spectrum
\begin{equation}
 \left\{q_1\times1,\ q_2\times6,\ q_3\times1\right\}.
 \label{eq:42-middle-spectrum-main}
\end{equation}
This sector consists of the two target strings $01$ and $10$, each associated
with an ancillary block of dimension four. A shell-by-shell
Hamming-symmetric permutation would have to divide the multiplicity of every
$q_k$ equally between these two blocks. This is impossible because the
multiplicities of $q_1$ and $q_3$ in Eq.~\eqref{eq:42-middle-spectrum-main}
are odd. Thus, although permutations attain the minimum mean target energy,
a Hamming-symmetric optimal permutation does not exist already for the
$4\to2$ task. Appendix~\ref{app:four-to-two-obstruction} gives the complete
block assignment and shows the resulting mismatch explicitly. More general
locally balanced permutations are characterized later in
Sec.~\ref{sec:local-balance}; the present example makes the essential point
that passive energy ordering alone does not enforce a common target
temperature.

The natural solution is to replace the discrete partition of the eigenvalues
inside each degenerate target-energy sector by a coherent average. Starting
from an arbitrary passive permutation, we apply a target-only normalized
complex-Hadamard unitary within every fixed-Hamming-weight subspace. Equal
moduli of its matrix elements ensure that every target string of weight $w$
acquires the same diagonal ancillary block: the arithmetic average of all
blocks in that sector. This operation therefore equalizes the one-qubit
target populations without transferring probability between different
values of $w$. Moreover, it mixes only target strings of equal energy and
acts as the identity on the ancillary register, so it commutes with the total
Hamiltonian. The equal-temperature condition can consequently be imposed
without changing either the passive cooling optimum or the work cost. We now
give the construction explicitly.
With the block decomposition now fixed, consider a target Hamming-weight sector $w$ and define
\begin{equation}
 \mathcal X_w=\{x\in\{0,1\}^M:w(x)=w\},
 \qquad K_w=|\mathcal X_w|=\binom Mw.
 \label{eq:weight-set}
\end{equation}
Choose a normalized complex-Hadamard unitary $V_w$ on the span of the target strings in $\mathcal X_w$, namely
\begin{equation}
 |\langle y|V_w|x\rangle|^2=\frac{1}{K_w}
 \qquad (x,y\in\mathcal X_w).
 \label{eq:hadamard-property}
\end{equation}
A convenient choice is the discrete Fourier matrix
\begin{equation}
 (F_{K_w})_{rs}=\frac{1}{\sqrt{K_w}}e^{2\pi i rs/K_w}.
 \label{eq:fourier-matrix}
\end{equation}
Define the target-sector symmetrization
\begin{equation}
 U_{\rm sym}=\bigoplus_{w=0}^{M}
 \left(V_w\otimes\mathbb I_A\right),
 \label{eq:target-symmetrization}
\end{equation}
with $V_0=V_M=1$.

For a fixed sector $w$, the diagonal ancillary block after symmetrization is
\begin{align}
 A'_y
 &=\sum_{x\in\mathcal X_w}
 |\langle y|V_w|x\rangle|^2 A_x \\
 &=\frac{1}{K_w}\sum_{x\in\mathcal X_w}A_x
 =:\overline A_w,
 \qquad y\in\mathcal X_w.
 \label{eq:block-average}
\end{align}
\begin{figure}[t]
\centering
\begingroup
\setlength{\arraycolsep}{3pt}
\renewcommand{\arraystretch}{1.25}
\resizebox{0.98\columnwidth}{!}{$
\rho_{\rm e}=\left(\begin{array}{c|c|c|c}
\cellcolor{yellow!20}A_{00}&0&0&0\\ \hline
0&\cellcolor{blue!15}A_{01}&0&0\\ \hline
0&0&\cellcolor{red!15}A_{10}&0\\ \hline
0&0&0&\cellcolor{green!15}A_{11}
\end{array}\right)
\xrightarrow{\;1\oplus F_2\oplus1\;}
\rho'=\left(\begin{array}{c|c|c|c}
\cellcolor{yellow!20}A_{00}&0&0&0\\ \hline
0&\cellcolor{gray!15}\overline A_1&\cellcolor{orange!15}C&0\\ \hline
0&\cellcolor{violet!15}C^\dagger&\cellcolor{gray!15}\overline A_1&0\\ \hline
0&0&0&\cellcolor{green!15}A_{11}
\end{array}\right),
\quad \overline A_1=\frac{A_{01}+A_{10}}2.
$}
\endgroup
\caption{Complex-Hadamard correction for two targets. The unitary $U_{\rm sym}=1\oplus F_2\oplus1$ acts only on the target-weight-one subspace $\operatorname{span}\{\ket{01},\ket{10}\}$. It replaces the two diagonal ancillary blocks by their average and may create the coherence block $C$, while preserving every sector probability and the total energy.}
\label{QFTaction}
\end{figure}
For two targets, Fig.~\ref{QFTaction} illustrates this averaging explicitly in the only nontrivial sector, $w=1$. In general, the construction equalizes the entire diagonal ancillary operators, not merely their traces. Since $S_w=\sum_{x\in\mathcal X_w}\Tr A_x$, one has
\begin{equation}
 \Tr\overline A_w=\frac{S_w}{\binom Mw}.
 \label{eq:average-block-trace}
\end{equation}

\begin{theorem}[Universal optimal equal-temperature cooling]
\label{thm:main}
For every $N>M$ and every $P_1\in[0,1/2]$, the unitary
\begin{equation}
 U_\star=U_{\rm sym}U_{\rm e}
 \label{eq:optimal-unitary}
\end{equation}
produces identical diagonal one-qubit target marginals with common excited population $p_\star(P_1)$ defined in Eq.~\eqref{eq:pstar}. Moreover,
\begin{equation}
 \min_{\substack{U:\,\rho'_1=\cdots=\rho'_M\\
 [\rho'_\alpha,H_\alpha]=0}}
 \bra{1}\rho'_1\ket{1}
 =\min_U\frac{1}{M}\sum_{\alpha=1}^{M}
 \bra{1}\rho'_\alpha\ket{1}
 =p_\star(P_1).
 \label{eq:main-optimum}
\end{equation}
The symmetrization adds no work because
\begin{equation}
 [U_{\rm sym},H_{\rm tot}]=0.
 \label{eq:zero-extra-work}
\end{equation}
\end{theorem}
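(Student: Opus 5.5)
The proof has three parts: compute the marginals produced by $U_\star$, establish the chain of equalities in Eq.~\eqref{eq:main-optimum}, and verify Eq.~\eqref{eq:zero-extra-work}. I begin with the final state $\rho'=U_{\rm sym}\rho_{\rm e}U_{\rm sym}^\dagger$.

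\emph{Marginals of $U_\star$.} Write $\rho_{\rm e}$ in the block form of Eq.~\eqref{eq:block-state}. Since $U_{\rm sym}$ is block diagonal in $w$, $\rho'$ has no coherences between different target weights. Within sector $w$, its target-basis matrix elements are
\[
\langle y|\rho'|y'\rangle=\sum_{x\in\mathcal X_w}\langle y|V_w|x\rangle\langle x|V_w^\dagger|y'\rangle A_x .
\]
Tracing out the ancillas and then all targets except $\alpha$ produces only the terms with $y,y'$ differing at most in bit $\alpha$. Such strings have equal weight only if $y=y'$, so every off-diagonal element of $\rho'_\alpha$ in the local energy basis comes from strings of different Hamming weight. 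These vanish, and $\rho'_\alpha$ is diagonal. Its excited population is $\sum_y y_\alpha\Tr A'_y$. By Eqs.~\eqref{eq:block-average} and \eqref{eq:average-block-trace}, this equals
\[
\sum_w \frac{S_w}{\binom Mw}\,\bigl|\{y\in\mathcal X_w:y_\alpha=1\}\bigr|=\sum_w\frac{S_w}{\binom Mw}\binom{M-1}{w-1}=\sum_w\frac{w}{M}S_w=p_\star(P_1).
\]
This value is independent of $\alpha$, which proves the first claim. At $P_1=1/2$ all the $q_k$ coincide, so the same computation applies there without change.

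\emph{Optimality.} The objective $\Tr(K_MU\rho U^\dagger)$ equals $\sum_\alpha\bra1\rho'_\alpha\ket1$, so by Eqs.~\eqref{eq:spectral-minimum} and \eqref{eq:pstar} the middle minimum in Eq.~\eqref{eq:main-optimum} is exactly $p_\star$. The left-hand minimum runs over a subset of unitaries. On that subset all the populations are equal, so the objective $\bra1\rho'_1\ket1$ coincides with the average, and the constrained minimum is therefore at least $p_\star$. The first part shows that $U_\star$ is feasible and attains $p_\star$, so equality holds. I would also note that, because the average is bounded below by $p_\star$, no unitary can make all targets colder than $p_\star$, even without the equality constraint.

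\emph{Zero extra work.} $H_{\rm tot}$ is diagonal in the computational basis, with eigenvalue determined by the total Hamming weight $w(x)+w(a)$. Each block $V_w\otimes\mathbb I_A$ acts within target-weight-$w$ strings and leaves the ancilla string unchanged. On each invariant subspace spanned by $\mathcal X_w\otimes\ket a$, the Hamiltonian $H_{\rm tot}$ is therefore a multiple of the identity, and $U_{\rm sym}$ preserves these subspaces. The commutator therefore vanishes, and the energy change of $U_\star$ equals that of $U_{\rm e}$.

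\emph{Main obstacle.} The only delicate step is showing that the coherences created by $V_w$, the block $C$ in Fig.~\ref{QFTaction}, do not leave off-diagonal terms in the one-qubit marginals. The weight argument above handles this cleanly, and I expect the rest to be routine counting.
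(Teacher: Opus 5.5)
Your proposal is correct and follows essentially the same route as the paper. It uses the arbitrary-unitary spectral bound on $\sum_\alpha\bra{1}\rho'_\alpha\ket{1}$, the count $\binom{M-1}{w-1}S_w/\binom Mw=wS_w/M$ for the averaged blocks, the observation that a one-qubit coherence would need strings of different Hamming weight, and the commutation of $U_{\rm sym}$ with $H_{\rm tot}$. The only difference is cosmetic: you verify that commutation via the invariant subspaces $\operatorname{span}(\mathcal X_w)\otimes\ket{a}$, on which $H_{\rm tot}$ is a scalar, whereas the paper notes that $U_{\rm sym}$ commutes separately with the target and ancillary Hamiltonians.
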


\begin{proof}
The arbitrary-unitary lower bound has already been established in Eqs.~\eqref{eq:doubly-stochastic-form}--\eqref{eq:sector-spectral-minimum}. For any final state,
\begin{equation}
 \Tr(K_M\rho')=\sum_{\alpha=1}^{M}
 \bra{1}\rho'_\alpha\ket{1},
 \label{eq:mean-target-energy}
\end{equation}
so the spectral minimum is a lower bound on the sum, and hence on the mean, of the target populations.

It remains to prove that $U_\star$ attains this bound while equalizing all marginals. Fix a target position $\alpha$. Among the strings of weight $w$, exactly $\binom{M-1}{w-1}$ have $x_\alpha=1$. Using Eqs.~\eqref{eq:block-average} and~\eqref{eq:average-block-trace},
\begin{align}
 \bra{1}\rho'_\alpha\ket{1}
 &=\sum_{w=1}^{M}
 \sum_{\substack{x\in\mathcal X_w\\x_\alpha=1}}
 \Tr\overline A_w \\
 &=\sum_{w=1}^{M}\binom{M-1}{w-1}
 \frac{S_w}{\binom Mw} \\
 &=\sum_{w=1}^{M}\frac{w}{M}S_w
 =p_\star(P_1).
 \label{eq:explicit-marginal-calculation}
\end{align}
The result is independent of $\alpha$, so all target populations are equal. A one-qubit coherence would require an element between two target strings that agree in every position except $\alpha$. Such strings differ in Hamming weight by one, whereas $U_{\rm sym}$ never couples different weights. Every one-qubit target marginal is therefore diagonal.

Finally, $U_{\rm sym}$ mixes only target strings with the same number of target excitations and acts trivially on the ancillary register. It commutes separately with the target and ancillary Hamiltonians, proving Eq.~\eqref{eq:zero-extra-work}. Since the sector weights $S_w$ are unchanged, $U_\star$ saturates the arbitrary-unitary lower bound. This proves Eq.~\eqref{eq:main-optimum}.
\end{proof}

\begin{corollary}[Pareto optimality]
\label{cor:pareto}
No global unitary can bring all $M$ targets to excited-state populations strictly below $p_\star(P_1)$.
\end{corollary}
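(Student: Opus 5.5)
The corollary is a direct consequence of the passive spectral bound, Eq.~\eqref{eq:sector-spectral-minimum}, combined with the identity Eq.~\eqref{eq:mean-target-energy}. The approach is a proof by contradiction that uses only the lower-bound half of Theorem~\ref{thm:main}, with no reference to symmetrization or equality of marginals.

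\textbf{Step 1 (set up the contradiction).} Suppose some global unitary $U$ yields final marginals with $\bra{1}\rho'_\alpha\ket{1}<p_\star(P_1)$ for every $\alpha=1,\dots,M$. Importantly, we do \emph{not} assume the marginals are equal or diagonal; the hypothesis is only that each population lies strictly below $p_\star$.

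\textbf{Step 2 (sum over targets).} Summing the $M$ strict inequalities gives
\begin{equation}
 \Tr(K_MU\rho U^\dagger)=\sum_{\alpha=1}^{M}\bra{1}\rho'_\alpha\ket{1}<Mp_\star(P_1)=\sum_{w=0}^{M}wS_w(P_1).
\end{equation}
The first equality is Eq.~\eqref{eq:mean-target-energy}. It holds for any state, because $K_M$ is a sum of one-body operators, so its expectation depends only on the diagonal entries of the one-qubit marginals; off-diagonal coherences play no role.

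\textbf{Step 3 (contradict the bound).} The right-hand side above equals $\min_U\Tr(K_MU\rho U^\dagger)$ by Eqs.~\eqref{eq:spectral-minimum}--\eqref{eq:sector-spectral-minimum}, which were derived for arbitrary unitaries via the Birkhoff--von Neumann and rearrangement argument. Step 2 therefore asserts that some unitary lies strictly below the minimum, which is impossible.

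\textbf{Remark on attainment.} I would close by noting that the bound is sharp: Theorem~\ref{thm:main} exhibits $U_\star$, which places every target exactly at $p_\star$. Hence $p_\star$ is the Pareto frontier point along the symmetric direction.

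There is essentially no technical obstacle here. The only point needing care is Step 2's linearity identity for non-diagonal and possibly unequal marginals, which follows immediately from $\Tr(\ket{1}\!\bra{1}_\alpha\rho')=\bra{1}\rho'_\alpha\ket{1}$. The boundary case $P_1=1/2$ also holds, since there $p_\star=1/2$ and every state has mean population $1/2$, so a strict decrease for all targets is impossible.
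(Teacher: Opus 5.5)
Your proposal is correct and follows essentially the same argument as the paper. The paper's proof is a one-line contradiction: if every target population were strictly below $p_\star(P_1)$, their mean would be too, contradicting the arbitrary-unitary bound in Eq.~\eqref{eq:main-optimum}. That bound rests on the spectral minimum and on the identity Eq.~\eqref{eq:mean-target-energy}, which are exactly the ingredients you make explicit.
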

\begin{proof}
If every target population were strictly smaller than $p_\star(P_1)$, their mean would also be smaller than $p_\star(P_1)$, contradicting Eq.~\eqref{eq:main-optimum}.
\end{proof}

The theorem establishes attainability and optimality at the level of
the target excited-state population. To interpret this population as a
physical cooling temperature, observe that the identity is an
admissible unitary and leaves every target population equal to
$P_1$. Hence,
\begin{equation} p_\star(P_1)\leq
  P_1\leq\frac{1}{2}.
  \label{eq:pstar-positive-temperature}
\end{equation}
Because the final one-qubit target marginals are diagonal, each of
them is therefore a Gibbs state with nonnegative inverse
temperature
\begin{equation} \beta' = \frac{1}{\hbar\omega}
  \log\!\left[ \frac{1-p_\star(P_1)}{p_\star(P_1)} \right] \geq
  0. \label{eq:final-inverse-temperature}
\end{equation}
Strict cooling, $T'<T$, occurs precisely when $p_\star(P_1)<P_1$.

We remark that the operation in Eq.~\eqref{eq:target-symmetrization}
is a Fourier transform on the $\binom Mw$ target strings of fixed
weight, tensored with the ancillary identity. It is not a Fourier
transform on the full $\binom Mw2^{N-M}$-dimensional
target-plus-ancilla sector; the latter could mix ancillary energies
and need not be work-free.
\section{Exact cooling curve and low-temperature limit}
\label{sec:cooling-curve}
Let $c_{w,k}$ denote the number of copies of $q_k$ assigned by the passive rearrangement to target sector $w$. These integers admit a closed geometric definition. Introduce the cumulative shell multiplicities
\begin{equation}
 B_k=\sum_{j=0}^{k}\binom Nj,
 \qquad B_{-1}=0,
 \label{eq:cumulative-shells}
\end{equation}
and the cumulative target-sector capacities
\begin{equation}
 G_w=2^{N-M}\sum_{r=0}^{w}\binom Mr,
 \qquad G_{-1}=0.
 \label{eq:cumulative-sectors}
\end{equation}

In the ordered list of eigenvalues, shell $k$ occupies the integer
interval $(B_{k-1},B_k]$, whereas target sector $w$ occupies
  $(G_{w-1},G_w]$. Hence, the number of shell-$k$ eigenvalues assigned
    to sector $w$ is the cardinality of their overlap:
\begin{align}
c_{w,k} &= \left| (B_{k-1},B_k]\cap(G_{w-1},G_w] \right|
    \nonumber\\ &= \max\left\{ 0,\,
    \min(B_k,G_w)-\max(B_{k-1},G_{w-1})
    \right\}. \label{eq:explicit-cwk}
\end{align}
This expression is equivalent to the greedy prescription that fills
the target sectors in increasing target weight, always using the
largest remaining thermal eigenvalues. The total probability assigned
to target sector $w$ is therefore
\begin{align} S_w(P_1) &=
  \sum_{k=0}^{N}c_{w,k}q_k(P_1) \nonumber\\ &= \sum_{k=0}^{N}
  c_{w,k}P_0^{N-k}P_1^k.
  \label{eq:sector-polynomial}\\
 p_\star(P_1)&=\frac{1}{M}
 \sum_{w=0}^{M}\sum_{k=0}^{N}
 wc_{w,k}P_0^{N-k}P_1^k.
 \label{eq:exact-cooling-curve}
\end{align}

\subsection{Worked example: $N=6$ and $M=3$}
For $N=6$ and $M=3$, the initial shell multiplicities are $(1,6,15,20,15,6,1)$ and the target-sector capacities are $(8,24,24,8)$. Equation~\eqref{eq:explicit-cwk} gives
\begin{align}
 w=0:&\quad q_0\times1,\ q_1\times6,\ q_2\times1,\nonumber\\
 w=1:&\quad q_2\times14,\ q_3\times10,\nonumber\\
 w=2:&\quad q_3\times10,\ q_4\times14,\nonumber\\
 w=3:&\quad q_4\times1,\ q_5\times6,\ q_6\times1.
 \label{eq:63-sector-assignment}
\end{align}
Accordingly,
\begin{align}
 p_\star(P_1)=\frac{1}{3}\bigl[&14q_2+10q_3
 +2(10q_3+14q_4)\nonumber\\
 &+3(q_4+6q_5+q_6)\bigr].
 \label{eq:63-cooling-curve}
\end{align}
The weight-one and weight-two multiplicities are not divisible shell
by shell by $\binom31=\binom32=3$, so a Hamming-symmetric permutation
does not exist. Nevertheless, the target-only complex-Hadamard
correction makes all three marginals identical without changing
Eq.~\eqref{eq:63-cooling-curve}. This example also lies in the regime
where three parallel two-qubit protocols cannot cool.

\subsection{Low-temperature expansion}
Define
\begin{equation}
 \xi_M^\star=
 \max\left\{\xi\in\{1,\ldots,N\}:
 \sum_{k=0}^{\xi-1}\binom Nk\leq2^{N-M}\right\}.
 \label{eq:xi-definition}
\end{equation}
The target-weight-zero sector has capacity $2^{N-M}$. It contains every shell below $\xi_M^\star$ and
\begin{equation}
 R_{N,M}=2^{N-M}-\sum_{k=0}^{\xi_M^\star-1}\binom Nk
 \label{eq:remainder-zero-sector}
\end{equation}
copies of shell $\xi_M^\star$. Hence the number of shell-$\xi_M^\star$ eigenvalues left for positive target weight is
\begin{align}
 L_{N,M}
 &=\binom N{\xi_M^\star}-R_{N,M}\nonumber\
 &=\binom N{\xi_M^\star}-2^{N-M}
 +\sum_{k=0}^{\xi_M^\star-1}\binom Nk.
 \label{eq:leading-multiplicity}
\end{align}
By maximality in Eq.~\eqref{eq:xi-definition},
\begin{equation}
 \sum_{k=0}^{\xi_M^\star}\binom Nk>2^{N-M},
 \label{eq:xi-maximality}
\end{equation}
so $L_{N,M}>0$. The remaining members of this shell first enter the
target-weight-one sector. No positive-weight sector contains a shell
of lower index. The leading contribution to the common target
population is therefore
\begin{align}
 p_\star(P_1)
 &=C_{N,M}P_1^{\xi_M^\star}
 (1-P_1)^{N-\xi_M^\star}
 +O(P_1^{\xi_M^\star+1})\nonumber\\
 &=C_{N,M}P_1^{\xi_M^\star}
 +O(P_1^{\xi_M^\star+1}),
 \label{eq:low-p-expansion}
\end{align}
with the explicit positive coefficient
\begin{equation}
 C_{N,M}=\frac{L_{N,M}}{M}
 =\frac{1}{M}\left[
 \binom N{\xi_M^\star}-2^{N-M}
 +\sum_{k=0}^{\xi_M^\star-1}\binom Nk
 \right]>0.
 \label{eq:leading-coefficient}
\end{equation}
At low temperature, $P_1=e^{-\beta\hbar\omega}[1+o(1)]$ and $p_\star=e^{-\beta'\hbar\omega}[1+o(1)]$. Thus
\begin{equation}
 \beta'\hbar\omega=
 \xi_M^\star\beta\hbar\omega-\log C_{N,M}+o(1),
 \label{eq:beta-asymptotics}
\end{equation}
and, more precisely,
\begin{equation}
 \frac{T'}{T}=
 \left[
 \xi_M^\star-
 \frac{\log C_{N,M}}{\beta\hbar\omega}
 +o(\beta^{-1})
 \right]^{-1}.
 \label{eq:temperature-finite-correction}
\end{equation}
Consequently,
\begin{equation}
 T'\sim\frac{T}{\xi_M^\star}.
 \label{eq:temperature-asymptotics}
\end{equation}
For $M=1$, this recovers the known single-target low-temperature
behavior~\cite{BassmanOftelie2024}. The asymptotic equality should not be
confused with a finite-temperature identity;
Eq.~\eqref{eq:temperature-finite-correction} displays the leading
logarithmic correction.

\subsection{Cooling threshold}
\begin{theorem}[Exact ancillary-resource threshold]
\label{prop:cooling-threshold}
For $0<P_1<1/2$, nontrivial equal-temperature cooling is possible if and only if
\begin{equation}
 N-M\geq2.
 \label{eq:two-ancilla-threshold}
\end{equation}
Equivalently,
\begin{equation}
 p_\star(P_1)<P_1
 \quad\Longleftrightarrow\quad
 N-M\geq2.
 \label{eq:threshold-population-form}
\end{equation}
\end{theorem}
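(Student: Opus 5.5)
The plan is to compare the passive minimum $M\,p_\star(P_1)$ of Eq.~\eqref{eq:sector-spectral-minimum} with the value $M P_1$ attained by the identity unitary. Since the identity is admissible, we always have $M p_\star\le MP_1$, as already noted in Eq.~\eqref{eq:pstar-positive-temperature}. The theorem therefore reduces to a single question: does the computational-basis arrangement of the \emph{initial} state $\rho(P_1)$ already realize the rearrangement minimum of Eq.~\eqref{eq:spectral-minimum}? I would argue directly in the computational basis. Write a string as $(x,a)$ with target part $x\in\{0,1\}^M$ and ancillary part $a\in\{0,1\}^{N-M}$. Its initial eigenvalue is $q_{w(x)+w(a)}$ and its target weight is $w(x)$. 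By the strict ordering \eqref{eq:q-order}, valid for $0<P_1<1/2$, the shell index is monotone in the eigenvalue.

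\emph{Case $N-M=1$.} Here $w(a)\in\{0,1\}$, so target sector $w$ contains only eigenvalues from shells $w$ and $w+1$. Every eigenvalue in sector $w$ therefore has shell index $\le w+1$, while every eigenvalue in sector $w+1$ has shell index $\ge w+1$. Hence the initial assignment is monotone: larger eigenvalues never sit in a higher target sector than smaller ones, and ties within shell $w+1$ do not affect $\sum_i\varepsilon_i\lambda_i$. By the rearrangement inequality this arrangement attains the minimum in Eq.~\eqref{eq:spectral-minimum}. This gives $M p_\star=MP_1$, i.e.\ $p_\star(P_1)=P_1$, and no cooling is possible. (The case $N=M$ is excluded by hypothesis.) This establishes the direction ``$p_\star<P_1\Rightarrow N-M\ge2$''.

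\emph{Case $N-M\ge2$.} I would exhibit a single transposition of computational basis states that strictly lowers $\Tr(K_M\rho')$ below $MP_1$. Take the string $(0^M,a)$ with $w(a)=2$, which exists because $N-M\ge2$; it lies in sector $0$ and carries $q_2$. Take also the string $(x,0^{N-M})$ with $w(x)=1$, which lies in sector $1$ and carries $q_1$. Swapping their populations changes the target excitation sum by
\begin{equation}
 (1\cdot q_2+0\cdot q_1)-(0\cdot q_2+1\cdot q_1)=q_2-q_1<0,
\end{equation}
since $q_1>q_2$ for $0<P_1<1/2$. The minimum over unitaries is at most the value of this permutation, so $Mp_\star\le MP_1+q_2-q_1<MP_1$. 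Dividing by $M$ gives the strict inequality $p_\star(P_1)<P_1$. By Theorem~\ref{thm:main} this common population is attained with equal diagonal marginals, so the cooling is genuinely equal-temperature.

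The only delicate point is the monotonicity argument in the $N-M=1$ case. One must check that the interleaving of shells across adjacent sectors is consistent with the rearrangement inequality, which holds because consecutive sectors overlap only in a single shell of equal eigenvalues. Both directions rely on $0<P_1<1/2$ to obtain strict shell ordering. At the endpoints the strict inequality fails trivially: at $P_1=0$ and $P_1=1/2$ we have $p_\star=P_1$ regardless of $N-M$, which is why the theorem restricts to the open interval.
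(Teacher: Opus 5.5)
Your proof is correct and follows essentially the same route as the paper. Sufficiency uses the identical transposition between a $q_1$ string of target weight one and a $q_2$ string of target weight zero. For necessity, the paper's passive assignment for $N-M=1$ (with $\binom{N-1}{w}$ copies each of $q_w$ and $q_{w+1}$ in sector $w$) is exactly the identity arrangement whose monotonicity you verify; the only difference is that you conclude $p_\star=P_1$ directly from the identity attaining $MP_1$, instead of summing the binomial series to $(N-1)P_1$.
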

\begin{proof}
We prove necessity and sufficiency separately.

\emph{Necessity.} Let $N-M=1$. The target-weight-$w$ sector has
dimension
\begin{equation}
 D_w=2\binom{N-1}{w}.
 \label{eq:one-ancilla-sector-dimension-main}
\end{equation}
A $K_{N-1}$-passive assignment fills this sector with
\begin{equation}
 \binom{N-1}{w}\ \text{copies of }q_w
 \quad\text{and}\quad
 \binom{N-1}{w}\ \text{copies of }q_{w+1}.
 \label{eq:one-ancilla-passive-assignment-main}
\end{equation}
The sector capacity is saturated, and the assignment reproduces every shell multiplicity because
\begin{equation}
 \binom{N-1}{k}+\binom{N-1}{k-1}=\binom Nk.
 \label{eq:one-ancilla-pascal-main}
\end{equation}
It also respects $q_0>q_1>\cdots>q_N$, and is therefore passive. Hence
\begin{align}
 \min_U\Tr(K_{N-1}U\rho U^\dagger)
 &=\sum_{w=0}^{N-1}w\binom{N-1}{w}(q_w+q_{w+1})\nonumber\\
 &=\sum_{w=0}^{N-1}w\binom{N-1}{w}P_0^{N-1-w}P_1^w\nonumber\\
 &=(N-1)P_1.
 \label{eq:one-ancilla-result-main}
\end{align}
Thus the minimum mean target population equals $P_1$. The identity attains equality, so one ancillary qubit cannot cool.

\emph{Sufficiency.} Assume $N-M\geq2$. A basis state with one target excitation and no ancillary excitations has probability $q_1$, whereas a state with no target excitations and two ancillary excitations has probability $q_2$. Both states exist because $M\geq1$ and $N-M\geq2$. Since $q_1>q_2$, exchanging these populations lowers the target excitation by $q_1-q_2>0$. The initial arrangement is not $K_M$-passive, and therefore
\begin{equation}
 p_\star(P_1)<P_1.
 \label{eq:strict-cooling-two-ancillas}
\end{equation}
Theorem~\ref{thm:main} then realizes this smaller mean as the common population of all targets.
\end{proof}
\section{When can the complex-Hadamard correction be omitted?}
\label{sec:qft-free}

Having determined the universal cooling limit, we now turn from
performance to implementation. The universal construction is optimal,
but implementing the complex-Hadamard blocks may be circuit-expensive.
We therefore seek optimal permutations that already possess sufficient
symmetry.

\subsection{Hamming-symmetric optimal permutations}
\label{sec:hamming-symmetric}

Let $n_{x,k}$ be the number of shell-$k$ eigenvalues $q_k$ assigned to the
block $A_x$.  We call an optimal permutation \emph{shell-by-shell
Hamming symmetric} when
\begin{equation}
 n_{x,k}=n_{y,k}
 \quad\text{for all }k\text{ whenever }w(x)=w(y).
 \label{eq:hamming-symmetry}
\end{equation}
This implies equal block operators up to an ancillary basis
permutation ($A_x$ and $A_y$ are diagonal and contain the same
multiset of eigenvalues, including multiplicities) and,
in particular,
\begin{equation}
 \Tr A_x=\Tr A_y
 \quad\text{whenever }w(x)=w(y),
 \label{eq:equal-traces}
\end{equation}
for every value of $P_1$.

Let $c_{w,k}$ be the passive sector multiplicities in
Eq.~\eqref{eq:sector-polynomial}.  There are $K_w=\binom Mw$ blocks in sector
$w$.  Equation~\eqref{eq:hamming-symmetry} is feasible if and only if
\begin{equation}
 c_{w,k}\equiv0\pmod{K_w}
 \qquad\text{for every }w,k.
 \label{eq:divisibility}
\end{equation}
Indeed, every block must receive exactly $c_{w,k}/K_w$ copies of $q_k$.

\subsection{Divisibility algorithm}
\label{sec:algorithm}

The following greedy procedure determines the integers $c_{w,k}$ and tests
Eq.~\eqref{eq:divisibility}.

\begin{enumerate}
\item Initialize shell reservoirs
\begin{equation}
 r_k=\binom Nk,
 \qquad k=0,\ldots,N,
 \label{eq:algorithm-reservoir}
\end{equation}
and set $k=0$.

\item For $w=0,\ldots,M$, set the remaining sector capacity to
\begin{equation}
 C_w=\binom Mw2^{N-M}.
 \label{eq:algorithm-capacity}
\end{equation}

\item While $C_w>0$, take
\begin{equation}
 t=\min\{r_k,C_w\},
 \label{eq:algorithm-take}
\end{equation}
set $c_{w,k}=t$, update $r_k\leftarrow r_k-t$ and
$C_w\leftarrow C_w-t$, and increase $k$ whenever $r_k=0$.

\item At every allocation, test
\begin{equation}
 t\equiv0\pmod{\binom Mw}.
 \label{eq:algorithm-test}
\end{equation}
If every test succeeds, a Hamming-symmetric optimal permutation exists and
the complex-Hadamard correction is unnecessary.  If one test fails, no
shell-by-shell Hamming-symmetric permutation exists.
\end{enumerate}

The test is temperature independent because the ordering of the shells
is temperature independent.  It classifies the strong
Hamming-symmetric subclass only; failure of
Eq.~\eqref{eq:algorithm-test} does not exclude more general locally
balanced permutations. Figure~\ref{stateforM} visualizes the
target-weight sectors underlying both tests.
\begin{figure}
    \centering
    \begin{tikzpicture}[scale=0.8]

  \def\ha{0.4}
  \def\hb{1.0}
  \def\hc{1.8}
  \def\hd{2.6}

  \definecolor{col0}{RGB}{70,130,180}
  \definecolor{col1}{RGB}{100,180,100}
  \definecolor{col2}{RGB}{200,200,60}
  \definecolor{colMh}{RGB}{220,160,50}
  \definecolor{colM2}{RGB}{230,130,50}
  \definecolor{colM1}{RGB}{210,70,70}
  \definecolor{colM}{RGB}{160,30,30}

  \pgfmathsetmacro{\oa}{0}
  \pgfmathsetmacro{\ob}{\oa+\ha}
  \pgfmathsetmacro{\oc}{\ob+\hb}
  \pgfmathsetmacro{\od}{\oc+\hc+0.7}
  \pgfmathsetmacro{\oe}{\od+\hd+0.7}
  \pgfmathsetmacro{\of}{\oe+\hc}
  \pgfmathsetmacro{\og}{\of+\hb}
  \pgfmathsetmacro{\N}{\og+\ha}

  \draw[thick] (0,0) rectangle (\N,-\N);

  \fill[col0!70] (\oa,-\oa) rectangle (\oa+\ha,-\oa-\ha);
  \draw[thick]   (\oa,-\oa) rectangle (\oa+\ha,-\oa-\ha);

  \fill[col1!70] (\ob,-\ob) rectangle (\ob+\hb,-\ob-\hb);
  \draw[thick]   (\ob,-\ob) rectangle (\ob+\hb,-\ob-\hb);
  \node[font=\small] at (\ob+\hb/2, -\ob-\hb/2)
        {};

  \fill[col2!70] (\oc,-\oc) rectangle (\oc+\hc,-\oc-\hc);
  \draw[thick]   (\oc,-\oc) rectangle (\oc+\hc,-\oc-\hc);
  \node at (\oc+\hc/2, -\oc-\hc/2)
        {$A_{w_2}$};

  \node[font=\Large] at (\oc+\hc+0.35, -\oc-\hc-0.35) {$\ddots$};

  \fill[colMh!70] (\od,-\od) rectangle (\od+\hd,-\od-\hd);
  \draw[thick]    (\od,-\od) rectangle (\od+\hd,-\od-\hd);
  \node at (\od+\hd/2, -\od-\hd/2)
        {$A_{w_{\lfloor M/2 \rfloor}}$};

  \node[font=\Large] at (\od+\hd+0.35, -\od-\hd-0.35) {$\ddots$};

  \fill[colM2!70] (\oe,-\oe) rectangle (\oe+\hc,-\oe-\hc);
  \draw[thick]    (\oe,-\oe) rectangle (\oe+\hc,-\oe-\hc);
  \node at (\oe+\hc/2, -\oe-\hc/2)
        {$A_{w_{M-2}}$};

  \fill[colM1!70] (\of,-\of) rectangle (\of+\hb,-\of-\hb);
  \draw[thick]    (\of,-\of) rectangle (\of+\hb,-\of-\hb);
  \node[font=\small] at (\of+\hb/2, -\of-\hb/2){};

  \fill[colM!70] (\og,-\og) rectangle (\og+\ha,-\og-\ha);
  \draw[thick]   (\og,-\og) rectangle (\og+\ha,-\og-\ha);

\end{tikzpicture}
    \caption{Block structure induced by target Hamming weight. Sector $w$ contains $\binom Mw$ target strings, each labeling an ancillary block of dimension $2^{N-M}$. The passive rearrangement fills these sectors in increasing $w$.}
    \label{stateforM}
\end{figure}

\section{Beyond equal traces: the true local-balance condition}
\label{sec:local-balance}

The divisibility test is deliberately strong because it equalizes the complete shell content of every block. Equal target temperatures require less. For a diagonal permutation output, the excited population of target
$\alpha$ is
\begin{align}
 p'_{1,\alpha}(P_1)
 &=\sum_x x_\alpha\Tr A_x
 \nonumber\\
 &=\sum_{k=0}^{N}
 \left(\sum_xx_\alpha n_{x,k}\right)q_k(P_1).
 \label{eq:local-population-shells}
\end{align}
The functions $q_k(P_1)=P_1^k(1-P_1)^{N-k}$ are linearly independent.  To
see this, divide a putative identity $\sum_kd_kq_k(P_1)=0$ by
$(1-P_1)^N$ and set $z=P_1/(1-P_1)$; one obtains a polynomial
$\sum_kd_kz^k$ vanishing on an interval, hence all $d_k$ vanish.
Consequently, a single permutation produces equal target populations for
every $P_1$ if and only if
\begin{equation}
 \sum_xx_\alpha n_{x,k}
 =\sum_xx_\gamma n_{x,k}
 \qquad\text{for all }\alpha,\gamma,k.
 \label{eq:shellwise-local-balance}
\end{equation}
We call Eq.~\eqref{eq:shellwise-local-balance} \emph{shell-wise one-body
balance}.

Hamming symmetry implies one-body balance, but the converse is false.  The
former fixes the full distribution over target strings inside each weight
sector; the latter fixes only its one-body marginals.  Therefore, the
modular table generated by Sec.~\ref{sec:algorithm} answers the operationally
useful question ``when can the complex-Hadamard correction be omitted using a fully Hamming-symmetric
permutation?'' but not the more general question ``when can it be
omitted using any temperature-independent permutation?''

The shell-wise balance equations admit an exact characterization.
\begin{theorem}[Exact local-balance criterion]
\label{thm:exact-local-balance}
Let $c_{w,k}$ be the passive multiplicity of shell $k$ in the target Hamming-weight-$w$ sector. A temperature-independent optimal permutation with identical one-qubit target marginals exists if and only if
\begin{equation}
 M\mid wc_{w,k}
 \qquad
 \text{for every }w=1,\ldots,M-1
 \text{ and }k=0,\ldots,N.
 \label{eq:exact-local-balance-criterion}
\end{equation}
\end{theorem}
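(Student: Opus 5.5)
The proof has two directions. The easy one is necessity; the real work is a combinatorial construction for sufficiency. Throughout, for a permutation output write $n_{x,k}$ as in Sec.~\ref{sec:hamming-symmetric}, and define the sector-resolved incidences
\[
a_{\alpha,w,k}=\sum_{x\in\mathcal X_w,\;x_\alpha=1}n_{x,k},
\qquad
\sum_{\alpha=1}^{M}a_{\alpha,w,k}=w\,c_{w,k}.
\]
The identity holds because each string of weight $w$ contributes to exactly $w$ positions. Shell-wise one-body balance, Eq.~\eqref{eq:shellwise-local-balance}, says only that $\sum_w a_{\alpha,w,k}$ is independent of $\alpha$. The criterion~\eqref{eq:exact-local-balance-criterion} says that each $a_{\alpha,w,k}$ can be made independent of $\alpha$.

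\emph{Necessity.} First I pin down the shell content of every optimal permutation. For $0<P_1<1/2$ the $q_k$ are strictly ordered, Eq.~\eqref{eq:q-order}. So any permutation attaining Eq.~\eqref{eq:spectral-minimum} must send the sorted spectrum to sectors in sorted order, up to reshuffling of equal eigenvalues. This means it places exactly $c_{w,k}$ copies of $q_k$ in sector $w$, as in Eq.~\eqref{eq:explicit-cwk}. By linear independence of the $q_k$, equal marginals for all $P_1$ is exactly shell-wise balance.

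Next I upgrade this to sector-wise balance by induction on $w$. Sector $0$ is trivial. Suppose every $a_{\alpha,w',k}$ with $w'<w$ is $\alpha$-independent. The shells meeting sector $w$ form a consecutive interval $k_-\le k\le k_+$, and I treat them in three groups.
\begin{itemize}
\item A shell with $k_-<k<k_+$ lies entirely in sector $w$, so shell-wise balance gives balance of $a_{\alpha,w,k}$ directly.
\item The first shell $k_-$ may also occupy earlier sectors. Those sectors are balanced by the induction hypothesis, so subtracting them from the shell-wise balance leaves $a_{\alpha,w,k_-}$ balanced.
\item For the last shell $k_+$, every block has capacity $2^{N-M}$. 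Hence $\sum_k a_{\alpha,w,k}=\binom{M-1}{w-1}2^{N-M}$ is automatically balanced, which forces $a_{\alpha,w,k_+}$ to be balanced.
\end{itemize}
Then each $a_{\alpha,w,k}=wc_{w,k}/M$ is an integer, which is Eq.~\eqref{eq:exact-local-balance-criterion}. The case $w=M$ is vacuous.

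\emph{Sufficiency.} Let $g=\gcd(M,w)$ and $d=M/g$. Then $M\mid wc_{w,k}$ is equivalent to $d\mid c_{w,k}$. I view sector $w$ as a list of $\binom Mw 2^{N-M}$ slots, namely $2^{N-M}$ copies of each $x\in\mathcal X_w$. The goal is to order these slots as a concatenation of \emph{balanced $d$-packets}: $d$ strings whose one-body incidence is uniform, each position receiving $w/g$ ones. I then fill the list with the passive spectrum in order. Each shell occupies a consecutive run of length $c_{w,k}$, and every run boundary falls at a multiple of $d$ because every $c_{w,k}$ is a multiple of $d$. So each shell receives a union of whole packets and is balanced. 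The resulting permutation is passive, diagonal and temperature independent, with equal marginals by Eq.~\eqref{eq:local-population-shells}.

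A natural packet is the cyclic orbit of $(1^{w/g}0^{(M-w)/g})^{g}$, which has size exactly $d$ and uniform incidence. The remaining task is to cover all of $\mathcal X_w$, with multiplicity $2^{N-M}$, by such packets.

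\emph{Main obstacle.} The hard step is this packet decomposition. A general cyclic orbit has size $e$ with $d\mid e$, and it is balanced as a whole. But a sub-block of $d$ of its rotations need not be balanced, since rotation by multiples of $g$ balances positions only within residue classes mod $g$. I would first try the $2^{N-M}\ge 4$ copies (since $N-M\ge2$ whenever cooling occurs) together with the known fact that $d\mid\binom Mw$. The idea is to pair an unbalanced sub-block of one orbit with a complementary sub-block, for instance the same rotations of a string shifted by one residue class. If that fails, I would fall back on a flow/integer-decomposition argument: the fractional balanced solution $n_{x,k}=c_{w,k}/\binom Mw$ exists, and I would seek integral rounding via a Baranyai-type theorem on the uniform hypergraph of $w$-subsets of $[M]$.
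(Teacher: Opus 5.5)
Your necessity direction is correct, and in one respect it is more careful than the paper's. The paper counts incidences within each sector, tacitly assuming that the balance in Eq.~\eqref{eq:shellwise-local-balance}, which is summed over all sectors, holds sector by sector. Your induction justifies exactly that step: interior shells directly, the first shell via the induction hypothesis, the last shell via the block capacity. Your reduction of sufficiency is also sound: if the $2^{N-M}\binom Mw$ slots of sector $w$ form a concatenation of balanced $d$-packets, filling them in passive order works because $d\mid c_{w,k}$. The genuine gap is that you never prove this packet decomposition exists, and it is the entire combinatorial content of sufficiency.

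Your cyclic route cannot be expected to supply it. When $w\mid M$ one has $g=w$ and $d=M/w$, so each position must receive exactly one incidence and a packet is a perfect matching of $K_M^{(w)}$. You are then asking for a resolution of $2^{N-M}$ copies of $K_M^{(w)}$, and for a single copy this is precisely Baranyai's theorem. The $d$ rotations of a seed by multiples of $g$ are balanced only when its ones are equidistributed over the residues mod $g$, and nothing in your sketch handles the other strings. Likewise, ``rounding'' the symmetric point $n_{x,k}=c_{w,k}/\binom Mw$ is not an argument as stated, because the balance equalities are exactly what make integrality hard.

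The missing ingredient is the prescribed-class-size form of Baranyai's theorem. For positive integers $a_1+\dots+a_t=\binom Mw$, the $w$-subsets of $\{1,\dots,M\}$ split into classes of sizes $a_i$ in which every vertex has degree $\lfloor a_iw/M\rfloor$ or $\lceil a_iw/M\rceil$. Take $t=\binom Mw/d$ and all $a_i=d$. The average degree is then $w/g$, an integer, so every class is a balanced packet. Repeating this for each of the $2^{N-M}$ copies completes your construction.

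The paper instead applies Bryant's almost-regular coloring theorem directly to $2^{N-M}K_M^{(w)}$ with class sizes $c_{w,k}$; there $M\mid wc_{w,k}$ forces each almost-regular class to be regular. Once completed, your packet version needs only the classical single-copy theorem and a decomposition that does not depend on the $c_{w,k}$.
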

Necessity follows immediately by counting incidences: shell $k$
contributes $wc_{w,k}$ target incidences in sector $w$, and local
balance requires them to be shared equally among the $M$ target
positions. Sufficiency is less immediate. It follows by identifying
the weight-$w$ target strings with the edges of a complete $w$-uniform
multihypergraph and applying Bryant's almost-regular edge-coloring
theorem for symmetric hypergraphs~\cite{Bryant2016}, which contains
the relevant prescribed-class-size form of Baranyai's theorem as a
corollary~\cite{VanLintWilson2001}.

The two temperature-independent permutation criteria therefore differ
in the relevant divisor:
\begin{equation}
 \begin{array}{lll}
 \text{Hamming symmetry:} && \displaystyle \binom Mw\mid c_{w,k},\\[4pt]
 \text{one-body balance:} && \displaystyle M\mid wc_{w,k}.
 \end{array}
 \label{eq:criterion-comparison}
\end{equation}

For deciding existence, Theorem~\ref{thm:exact-local-balance} replaces
the integer-feasibility problem by the arithmetic test in
Eq.~\eqref{eq:exact-local-balance-criterion}. The integer formulation
remains useful because a feasible solution constructs an explicit
locally balanced block assignment. The variables $n_{x,k}\in\mathbb
N_0$ must satisfy
\begin{align}
 \sum_{x:w(x)=w}n_{x,k}&=c_{w,k},
 \label{eq:integer-sector-total}\\
 \sum_kn_{x,k}&=2^{N-M},
 \label{eq:integer-block-capacity}\\
 \sum_x(x_\alpha-x_\gamma)n_{x,k}&=0
 \quad\forall\alpha,\gamma,k.
 \label{eq:integer-local-balance}
\end{align}
A feasible solution gives an optimal, temperature-independent,
permutation-only protocol even if Eq.~\eqref{eq:divisibility}
fails. Solver-independent pseudocode for constructing such an
assignment is given in Appendix~\ref{app:local-balance-pseudocode}.

Theorem~\ref{thm:exact-local-balance} turns the exact temperature-independent permutation problem into an elementary integer test. Applying it exhaustively over
\begin{equation}
 M+2\leq N\leq128
 \label{eq:classification-range}
\end{equation}
gives the complete classification in
Table~\ref{tableFeasibility}. Every listed pair admits a single
temperature-independent optimal permutation, so the complex-Hadamard
correction is unnecessary. Black entries satisfy the stronger
Hamming-symmetric divisibility condition in
Eq.~\eqref{eq:divisibility}. Asterisked red bold entries fail that
condition but satisfy shell-wise one-body balance. The latter class
contains $58$ pairs. No locally balanced pair occurs for an omitted
value of $M$.
\begin{table*}[t]
\centering
\small
\renewcommand{\arraystretch}{1.15}
\begin{tabular}{cl}
\toprule
\textbf{Targets $M$} & \textbf{System sizes $N$ admitting a temperature-independent locally balanced optimum}\\
\midrule
$2$ & $5$--$7$, $9$--$14$, $17$--$29$, $33$--$59$, $65$--$121$\\
$3$ & $9$--$14$, $27$--$47$, $63$, $81$--$128$\\
$4$ & $10$, $\LBonly{18\text{--}19}$, $\LBonly{25}$, $34$--$35$, $49$--$51$, $\LBonly{52\text{--}53}$, $66$--$68$, $\LBonly{69}$, $81$--$83$, $97$--$112$\\
$5$ & $25$--$28$, $\LBonly{29\text{--}32}$, $33$--$39$, $75$--$83$, $\LBonly{125\text{--}128}$\\
$6$ & $\LBonly{18\text{--}20}$, $\LBonly{57\text{--}58}$\\
$7$ & $\LBonly{49\text{--}54}$, $55$, $\LBonly{56\text{--}77}$\\
$8$ & $67$, $\LBonly{98\text{--}100}$\\
$11$ & $\LBonly{121\text{--}128}$\\
\bottomrule
\end{tabular}
\caption{Complete classification of the permutation-only islands of
  feasibility in the range $M+2\leq N\leq128$. All listed pairs admit
  a temperature-independent shell-wise one-body-balanced optimal
  permutation, so the complex-Hadamard correction is
  unnecessary. Black entries also satisfy the stronger shell-by-shell
  Hamming-symmetric divisibility condition,
  Eq.~\eqref{eq:divisibility}. Red bold entries marked by an asterisk
  fail that condition but satisfy the exact local-balance condition,
  Eq.~\eqref{eq:exact-local-balance-criterion}.  Values of $M$ with no
  admitted system size are omitted.}
\label{tableFeasibility}
\end{table*}
\begin{remark}[Strict separation between the criteria]
The pair $(N,M)=(18,4)$ is the first one that fails the
Hamming-symmetric divisibility test but admits shell-wise one-body
balance. The separation is already visible in the target-weight-two
sector, whose nonzero passive shell multiplicities are
\begin{equation}
 c_{2,8}=24842,
 \qquad
 c_{2,9}=48620,
 \qquad
 c_{2,10}=24842.
 \label{eq:184-main-sector-multiplicities}
\end{equation}
Since this sector contains $\binom42=6$ target strings and all three
multiplicities are congruent to $2$ modulo $6$, the Hamming-symmetric
criterion fails. Nevertheless, the one-body criterion holds because
$4\mid 2c_{2,k}$ for $k=8,9,10$. Thus, the Hamming-symmetric class is
a strict subset of the temperature-independent locally balanced
class. An explicit block assignment is given in
Appendix~\ref{app:strict-separation}.
\end{remark}

\subsection{Numerical threshold and odd-$M$ conjecture}
\label{sec:odd-M-conjecture}
The islands of feasibility suggest an additional arithmetic threshold that is not captured by the exact divisor test alone. We evaluated the criterion in Theorem~\ref{thm:exact-local-balance} using exact integer arithmetic for every $2\leq M\leq20$ and every $M+2\leq N\leq M^2$. The only feasible pairs in this triangular region are
\begin{equation}
 \begin{split}
 &(N,M)=(9,3),(10,4),(25,5),\\
 &(18,6),(19,6),(20,6),(49,7),\\
 &(121,11),(169,13),(289,17),(361,19).
 \end{split}
 \label{eq:numerical-islands-below-M2}
\end{equation}
In particular, no odd $M\leq20$ has an island of feasibility with $N<M^2$. For the odd primes
\begin{equation}
 M=3,5,7,11,13,17,19,
 \label{eq:checked-odd-primes}
\end{equation}
the first feasible value in this range is exactly $N=M^2$. The odd
composite cases $M=9$ and $M=15$ have no feasible value even at
$N=M^2$. By contrast, the exceptions below the quadratic threshold
occur for the even target numbers $M=4$ and $M=6$, namely
$(N,M)=(10,4)$ and $(N,M)=(18,6),(19,6),(20,6)$.

This evidence motivates the following conjecture.

\textbf{Odd-$M$ conjecture.} For every odd $M\geq3$ and every
 $M+2\leq N<M^2$, no temperature-independent locally balanced optimal
 permutation exists. Hence the universal equal-temperature optimum requires
 the complex-Hadamard correction throughout this region.

The conjecture concerns a single permutation that works for every
initial temperature. It does not exclude temperature-specific optimal
permutations at isolated values of $P_1$, where different shells may
cancel numerically as described in
Sec.~\ref{sec:temperature-dependent}. A proof would require showing
that, for odd $M$ and $N<M^2$, at least one passive shell overlap
violates $M\mid wc_{w,k}$.

\section{Temperature-dependent permutation protocols}
\label{sec:temperature-dependent}

The preceding criterion concerns a single permutation that works for every initial temperature. If the permutation may instead be tailored to the initial state, the shell-wise constraints can be relaxed. At a fixed initial population $P_1=p$, equal local populations require only
numerical equality of the sums in Eq.~\eqref{eq:local-population-shells}.
Different shells may then compensate.  Such a cancellation need not persist
when $p$ changes and is not captured by the shell-wise conditions.

For example, consider $(N,M)=(8,2)$, which is not present in
Table~\ref{tableFeasibility}. Each target string labels an ancillary
block of dimension $2^{8-2}=64$, so the target-weight-zero sector has
capacity $64$ and the target-weight-one sector, which contains the two
strings $01$ and $10$, has capacity $2\times64=128$. The passive
ordering fills the weight-zero sector with the largest
eigenvalues. The complete shells $k=0,1,2$ contribute
\begin{equation}
 \binom80+\binom81+\binom82=1+8+28=37
 \label{eq:82-lower-shell-count}
\end{equation}
states. The remaining $64-37=27$ places are filled with $27$ of the
$\binom83=56$ copies of $q_3$. Hence $56-27=29$ copies of $q_3$ remain when
the weight-one sector begins. This sector then receives all
$\binom84=70$ copies of $q_4$ and needs a further
\begin{equation}
 128-29-70=29
 \label{eq:82-upper-boundary-count}
\end{equation}
copies of $q_5$ to reach its capacity. Therefore its spectrum is
\begin{equation}
 \{q_3\times29,q_4\times70,q_5\times29\}.
 \label{eq:82-spectrum}
\end{equation}
The equal boundary multiplicities are also consistent with the symmetry
$\binom8k=\binom8{8-k}$ of the binomial spectrum. The distribution
\begin{align}
 A_{01}&\leftarrow\{q_3\times12,q_4\times50,q_5\times2\},
 \label{eq:82-left}\\
 A_{10}&\leftarrow\{q_3\times17,q_4\times20,q_5\times27\}
 \label{eq:82-right}
\end{align}
has trace difference
\begin{equation}
 \Tr A_{01}-\Tr A_{10}
 =-5q_3+30q_4-25q_5.
 \label{eq:82-difference}
\end{equation}
Writing $z=P_1/(1-P_1)$ gives
\begin{equation}
 \Tr A_{01}-\Tr A_{10}
 =-5q_3(1-z)(1-5z),
 \label{eq:82-factorization}
\end{equation}
which vanishes at the nontrivial value $P_1=1/6$.  The permutation is still
energy-optimal because it redistributes eigenvalues only inside the fixed
weight-one sector.  Away from $P_1=1/6$, the same permutation remains optimal
for the mean target energy but no longer produces equal local populations.

Temperature-dependent permutations can therefore enlarge the set of
permutation-only implementations.  They cannot improve the optimal cooling
curve.  Indeed, Theorem~\ref{thm:main} provides a single
$P_1$-independent unitary attaining the unconstrained spectral minimum for
every $P_1$.  Any temperature-dependent unitary $V_{P_1}$ producing a common
population $\widetilde p(P_1)$ obeys
\begin{equation}
 M\widetilde p(P_1)
 =\Tr(K_MV_{P_1}\rho V_{P_1}^\dagger)
 \geq Mp_\star(P_1).
 \label{eq:temperature-dependent-bound}
\end{equation}
It may tie the universal optimum at a special temperature, but cannot
surpass it.

\section{Work cost}
\label{sec:work}
The previous sections determine the minimum target population independently of implementation cost. We now evaluate the energetic cost of an optimal transformation. For a cyclic implementation, the supplied average work is
\begin{equation}
 W=\Tr\!\left[H_{\rm tot}
 (U_\star\rho U_\star^\dagger-\rho)\right].
 \label{eq:work}
\end{equation}
Let $n$ be the total excitation-number operator. Since the Hamiltonian differs from $\hbar\omega n$ only by a constant,
\begin{equation}
 W=\hbar\omega\left(
 \langle n\rangle_{\rm f}-NP_1\right).
 \label{eq:work-excitation-form}
\end{equation}
Writing the final excitation number as the sum of target and ancillary contributions gives
\begin{equation}
 \frac{W}{\hbar\omega}
 =M(p_\star-P_1)
 +\left[\langle n_A\rangle_{\rm f}-(N-M)P_1\right].
 \label{eq:work-split}
\end{equation}
The first term is the energy removed from the targets, while the second is the energy gained by the ancillary register. Their difference is supplied by the external control.

Because $[U_{\rm sym},H_{\rm tot}]=0$, the complex-Hadamard correction contributes no additional work. The work is determined entirely by the chosen $K_M$-passive permutation. Among permutations with the same target-sector assignment, the final ancillary energy is minimized by assigning, within each fixed target weight, the largest remaining probabilities to the smallest ancillary excitation numbers. If $m_{k,w,a}$ denotes the number of shell-$k$ eigenvalues assigned to final target weight $w$ and ancillary weight $a$, then this minimum-work permutation representative has
\begin{equation}
 \frac{W_{\rm perm}^{\min}}{\hbar\omega}
 =\sum_{k,w,a}m_{k,w,a}q_k(w+a)-NP_1.
 \label{eq:minimum-permutation-work}
\end{equation}
This establishes minimality within the class of cooling-optimal permutations. It does not by itself prove that the same work is minimal over every coherent unitary attaining $p_\star$; that finer optimization remains open. Figures~\ref{fig:workHeatmapP02} and~\ref{fig:workHeatmapP04} show the resulting minimum permutation work for two representative initial populations.
\begin{figure}[t]
    \centering
    \safeincludegraphics[width=0.95\linewidth]{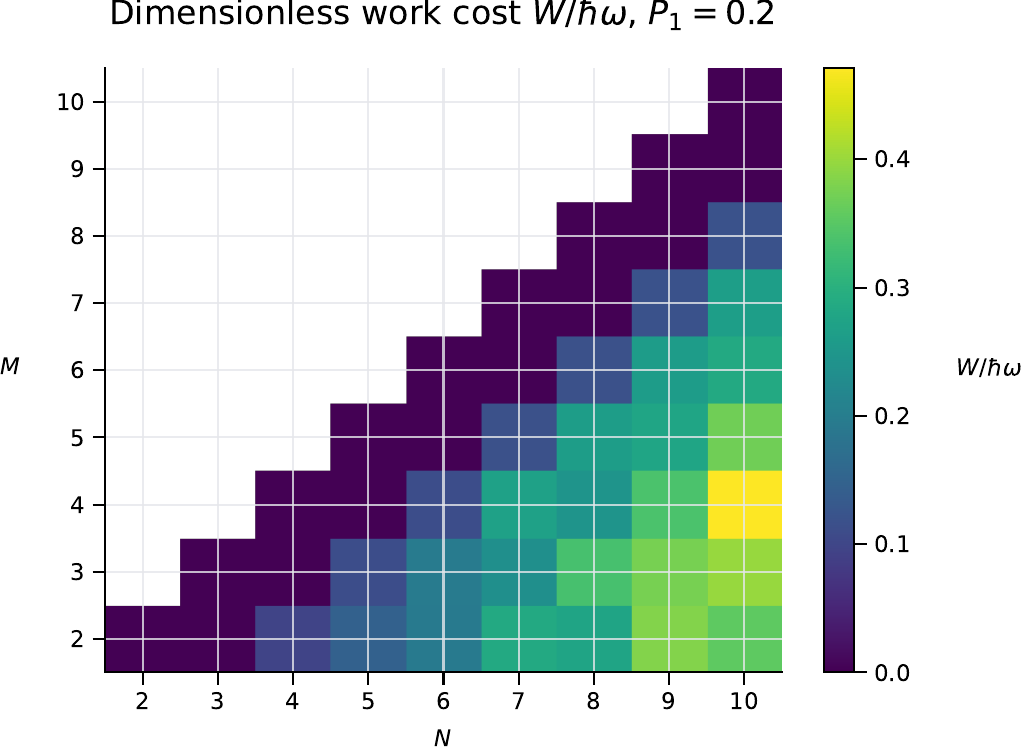}
    \caption{Minimum permutation work $W_{\rm perm}^{\min}/\hbar\omega$ for $2\leq M\leq N\leq10$ at $P_1=0.2$. The complex-Hadamard correction is energy preserving and does not change these values.}
    \label{fig:workHeatmapP02}
\end{figure}
\begin{figure}[t]
    \centering
    \safeincludegraphics[width=0.95\linewidth]{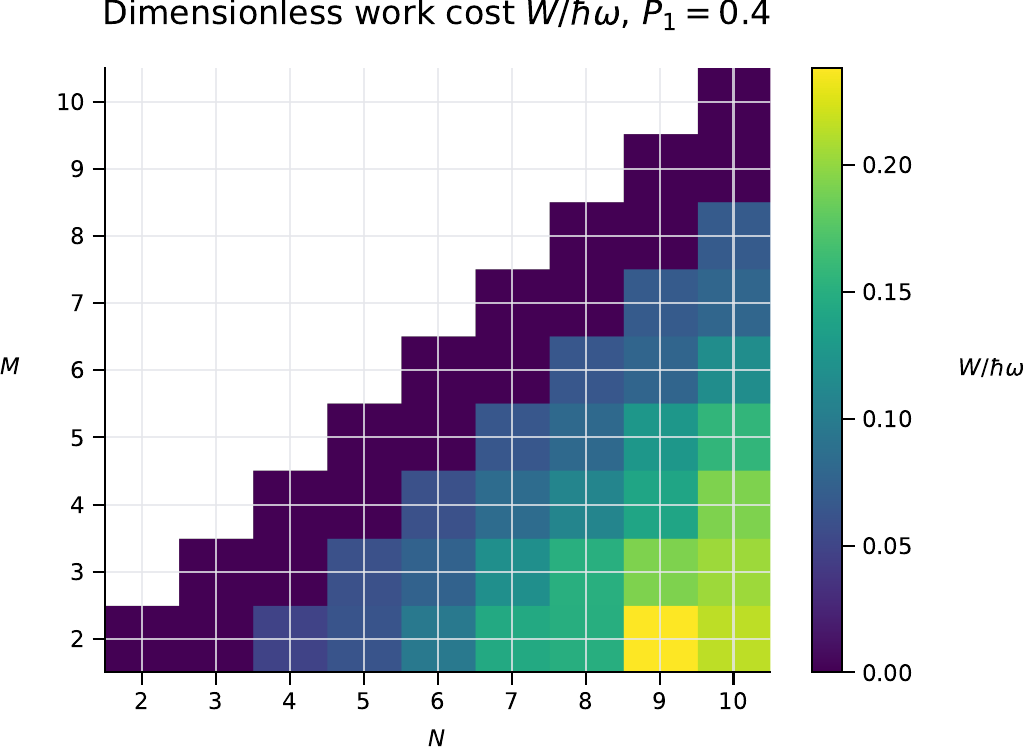}
    \caption{Minimum permutation work $W_{\rm perm}^{\min}/\hbar\omega$ for $2\leq M\leq N\leq10$ at $P_1=0.4$. The smaller spectral bias near $P_1=1/2$ reduces the work required by the passive rearrangement.}
    \label{fig:workHeatmapP04}
\end{figure}

\section{Global and parallel protocols}
\label{sec:global-parallel}
We finally compare the cooperative global optimum with a restricted strategy that cools the targets independently. A parallel benchmark must specify how the register is divided. We consider $M$ disjoint groups, one target per group, constrained to produce the same final target population. The natural temperature-independent benchmark uses equal group size
\begin{equation}
 n=\left\lfloor\frac{N}{M}\right\rfloor,
 \label{eq:parallel-group-size}
\end{equation}
and leaves the remaining $N-Mn$ qubits unused. Each group implements the optimal $n$-qubit single-target protocol. Denoting its population by $p_\star(n,1;P_1)$, we set
\begin{equation}
 p_{\rm par}(N,M;P_1)=p_\star(n,1;P_1).
 \label{eq:parallel-definition}
\end{equation}
More generally, any collection of disjoint single-target protocols followed by local operations that deliberately degrade colder outputs to a common population remains a particular admissible global unitary. Therefore the unrestricted global optimum obeys
\begin{equation}
 p_{\rm glob}(N,M;P_1)
 =p_\star(N,M;P_1)
 \leq p_{\rm par}(N,M;P_1)
 \label{eq:global-parallel}
\end{equation}
pointwise.

The inequality is strict throughout
\begin{equation}
 M+2\leq N\leq3M-1.
 \label{eq:exclusive-window}
\end{equation}
Indeed, Theorem~\ref{prop:cooling-threshold} shows that the global protocol cools whenever $N\geq M+2$, whereas every equal parallel group in this window has $n<3$ and a single-target protocol with fewer than three qubits cannot cool. This is a genuinely cooperative regime: only the joint transformation lowers the common target temperature.

Outside this window, strictness is not universal. For example,
\begin{equation}
 p_{\rm glob}(6,2;P_1)=p_{\rm par}(6,2;P_1)
 =3P_1^2-2P_1^3.
 \label{eq:62-equality}
\end{equation}
Thus the correct finite-size statement is that the global protocol is
always at least as good and is strictly better throughout
Eq.~\eqref{eq:exclusive-window}. It ties the parallel benchmark for
the exceptional pair $(N,M)=(6,2)$. Exact symbolic checks over the
finite range investigated revealed no other equality, although the
uniqueness of this tie remains a conjecture.

Figure~\ref{M2plots} illustrates these
possibilities for four representative two-target registers.
\begin{figure*}[t]
    \subfloat{\safeincludegraphics[width=0.45\linewidth]{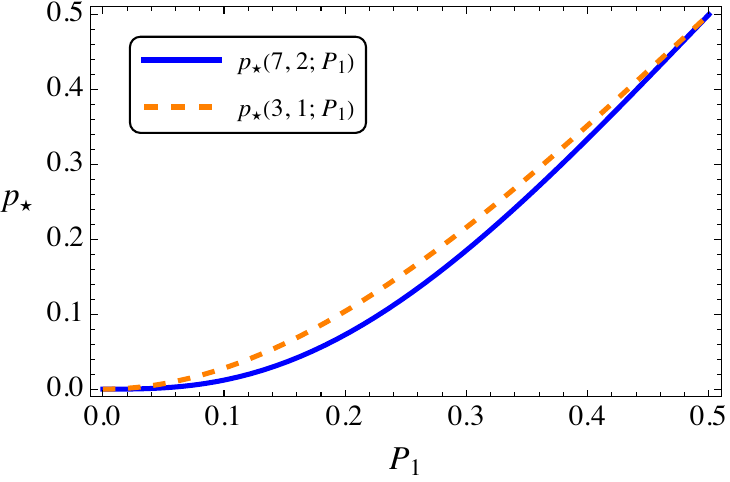}}\hfill
    \subfloat{\safeincludegraphics[width=0.45\linewidth]{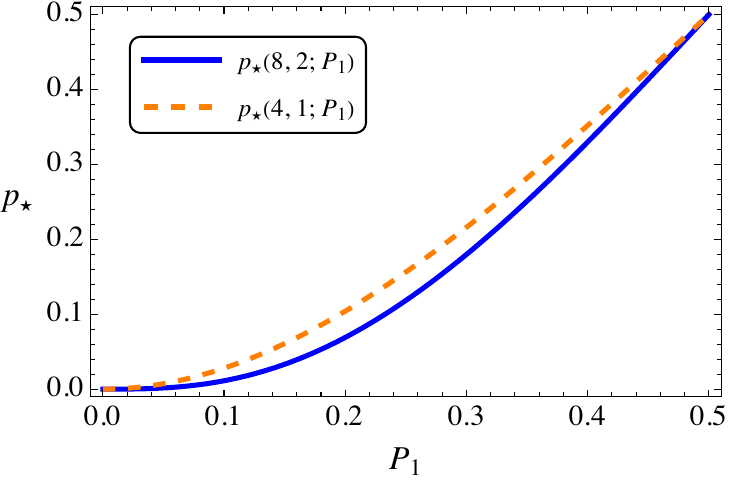}}\\
    \subfloat{\safeincludegraphics[width=0.45\linewidth]{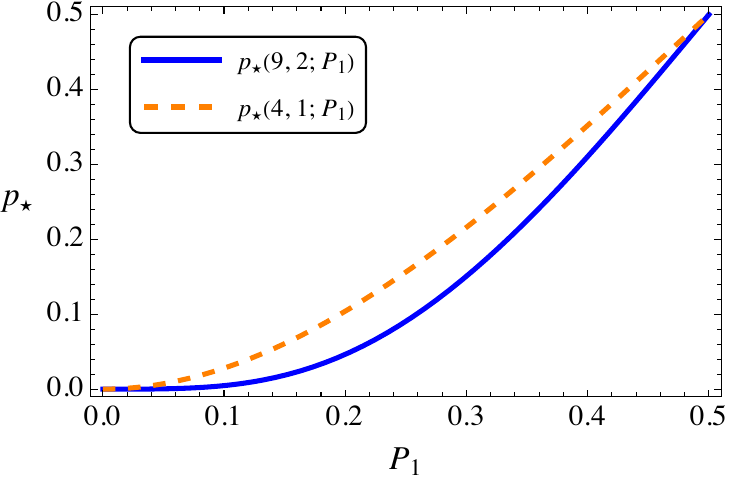}}\hfill
    \subfloat{\safeincludegraphics[width=0.45\linewidth]{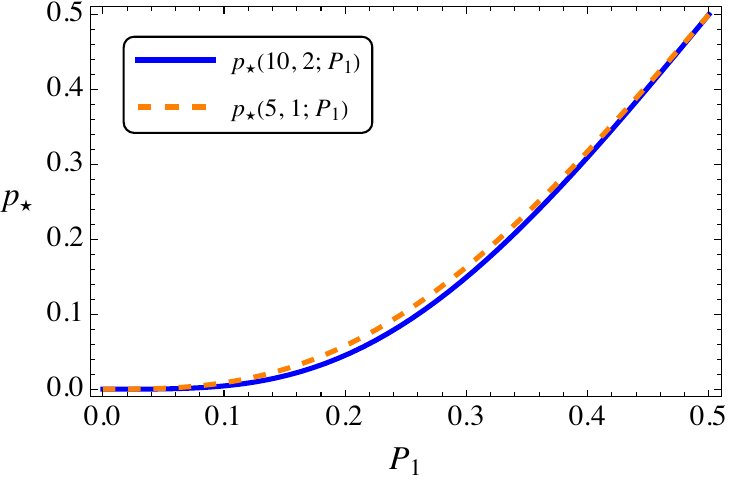}}
    \caption{Optimal common target population for $M=2$. The global protocol (blue) is compared with two equal parallel single-target protocols (orange) for $N=7,8,9,10$. The curves use $p_\star(N,2;P_1)$ and $p_\star(\lfloor N/2\rfloor,1;P_1)$, respectively.}
    \label{M2plots}
\end{figure*}

At low temperature, the global and equal-parallel cooling factors are $\xi_M^\star(N)$ and $\xi_1^\star(\lfloor N/M\rfloor)$, respectively. A Chernoff--Hoeffding estimate for the binomial lower tail gives
\begin{equation}
 \xi_M^\star(N)\geq
 \left\lfloor\frac{N}{2}-
 \sqrt{\frac{NM\ln2}{2}}\right\rfloor.
 \label{eq:hoeffding-xi}
\end{equation}
A derivation, including the integer rounding in Eq.~\eqref{eq:hoeffding-xi}, is given in Appendix~\ref{app:chernoff-hoeffding}.
At fixed $M$, $\xi_M^\star(N)=N/2-O(\sqrt N)$, while $\xi_1^\star(\lfloor N/M\rfloor)=N/(2M)+O(1)$. Hence
\begin{equation}
 \frac{\xi_M^\star(N)}
 {\xi_1^\star(\lfloor N/M\rfloor)}
 \longrightarrow M
 \qquad(N\to\infty),
 \label{eq:parallel-asymptotic-ratio}
\end{equation}
so the global protocol asymptotically reaches a temperature lower by a factor $M$ in the low-temperature regime.
Worked examples for $(N,M)=(5,2)$, $(9,3)$, $(6,3)$, and the equality case $(6,2)$ are given in Appendix~\ref{app:worked-examples}.

\section{Conclusions and outlook}
\label{sec:conclusion}
We have solved the closed-system dynamic-cooling problem for an
arbitrary number $M$ of target qubits drawn from $N$ identical thermal
qubits. The solution is universal: it applies to every $N>M$ and every
initial temperature. Its structure is simple. A passive permutation
first minimizes the total target energy by assigning the largest
eigenvalues of the initial state to target sectors of lowest Hamming
weight. A target-only complex-Hadamard transformation then symmetrizes
each fixed-weight sector. Because the second step mixes only
equal-energy target strings and leaves the ancillary register
untouched, it preserves the passive target-energy minimum and commutes
with the total Hamiltonian. The target marginals become identical and
remain diagonal. Thus, the common-temperature constraint entails
neither a loss of cooling nor an additional work cost.

This optimality statement is stronger than the existence of a
convenient protocol. The passive spectral bound constrains the mean
target excitation for every global unitary. Our construction saturates
that bound while making all target populations equal. Hence no unitary
can make every target colder than the temperature attained here, even
if the common-temperature requirement is abandoned. Equal-temperature
cooling is therefore not a compromise imposed on the unconstrained
optimum; it is an exactly attainable representative of it.

The exact cooling curve is fixed by the passive sector multiplicities
$c_{w,k}$ and is valid throughout the full temperature range. In the
low-temperature regime it reduces to the universal asymptotic law
$T'\sim T/\xi_M^\star$, with $\xi_M^\star$ determined by the binomial
cumulative condition in Eq.~\eqref{eq:xi-definition}. This formula
recovers the known single-target behavior~\cite{BassmanOftelie2024}
and shows that cooling is impossible when only a single ancillary
qubit is available. The latter statement was established independently
of the common-temperature constraint: with one ancilla, the minimum
mean target population is the initial population itself.

The comparison with independent single-target protocols reveals a
genuinely cooperative advantage. A parallel protocol is only a
restricted member of the set of global unitaries and therefore cannot
outperform the global optimum. In the window $M+2\leq N\leq3M-1$, the
global construction cools while the parallel strategy is completely
inert. Outside this window, the pair $(N,M)=(6,2)$ attains equality with the
equal-parallel benchmark, so strict superiority cannot be claimed
universally at finite $N$. Nevertheless, at fixed $M$ the low-temperature ratio of
global to parallel cooling factors approaches $M$ as $N$ increases.

A central conceptual outcome is the separation between thermodynamic
optimality and implementation. The universal complex-Hadamard
correction guarantees optimal equal-temperature cooling, but different
hardware platforms may favor permutation-only realizations. The
relevant hierarchy is
\begin{equation}
 \begin{split}
& \{\text{Hamming-symmetric permutations}\}\\
 &\subsetneq
 \{\text{temperature-independent locally balanced permutations}\}
 \\
 &\subseteq
 \{\text{temperature-specific optimal permutations}\}
 \\
 &\subseteq
 \{\text{universal optimal unitaries}\}.
 \end{split}
 \label{eq:hierarchy}
\end{equation}
The first class is certified by the divisibility algorithm of
Sec.~\ref{sec:algorithm}. The second is the exact temperature-independent permutation class and
is characterized by the arithmetic criterion in
Eq.~\eqref{eq:exact-local-balance-criterion}. Table~\ref{tableFeasibility} exhaustively
classifies both classes for $M+2\leq N\leq128$ and shows that $58$
pairs are locally balanced without being Hamming symmetric. The third
permits cancellations between thermal shells at isolated
temperatures. All three classes attain, at best, the same cooling
curve already reached by the universal construction.

Several questions remain open. On the practical side, the
target-sector complex-Hadamard transformations should be compiled into native
gate sets, and their depth, connectivity requirements, and robustness
to noise should be compared with permutation-only and approximate
implementations, extending the circuit-oriented analyses of
Refs.~\cite{BassmanOftelie2024,Difranco2024}. The minimum-work
representative among passive permutations is obtained by additionally
ordering the ancillary excitations, but whether the same cost is
minimal over all coherent representatives of the cooling optimum
remains to be determined. On the structural side, the exact arithmetic
criterion exposes sparse islands of feasibility and suggests the
odd-$M$ conjecture of Sec.~\ref{sec:odd-M-conjecture}: below $N=M^2$,
coherent symmetrization appears unavoidable for odd target
number. Proving or disproving this conjecture would clarify the
asymptotic geometry of the temperature-independent permutation
class. Extensions to nonidentical gaps and arbitrary local dimensions
would connect the present many-target construction with the
single-target machine criteria of Ref.~\cite{Xuereb2025}. Finally, the
coherent symmetrization can generate correlations within degenerate
target sectors; understanding whether these correlations can be
exploited rather than discarded may link dynamic cooling to broader
resource-theoretic questions concerning coherence, correlations, and
finite-resource
thermodynamics~\cite{Baumgratz2014,PerarnauLlobet2015,Taranto2023}.

The main conclusion is therefore both operational and fundamental: for
finite registers of identical thermal qubits, the lowest common
temperature compatible with unitarity is exactly attainable for any
number of targets and at any initial temperature. The discrete
structure of the spectrum determines how the optimum may be
implemented, but it does not obstruct the optimum itself.
\begin{acknowledgments}
G.G. acknowledges support from the Ministero dell'Universit\`a e della
Ricerca under the ``Rita Levi-Montalcini'' grant and from INFN.
\end{acknowledgments}

\onecolumngrid
\appendix

\section{The permutation obstruction for the $4\to2$ task}
\label{app:four-to-two-obstruction}
Here we give the explicit obstruction referred to at the beginning of
Sec.~\ref{sec:universal-protocol}. For $N=4$, the thermal spectrum is
\begin{equation}
 \left\{q_0\times1,\ q_1\times4,\ q_2\times6,
 q_3\times4,\ q_4\times1\right\},
 \qquad q_k=P_0^{4-k}P_1^k.
 \label{eq:42-full-spectrum}
\end{equation}
With $M=2$, each target string labels an ancillary block of dimension
$2^{N-M}=4$. The target Hamming-weight sectors have dimensions $4$, $8$, and
$4$. The $K_2$-passive assignment places the four largest eigenvalues in
$A_{00}$ and the four smallest ones in $A_{11}$:
\begin{align}
 A_{00}&\leftarrow\left\{q_0\times1,\ q_1\times3\right\},
 \label{eq:42-A00}\\
 A_{11}&\leftarrow\left\{q_3\times3,\ q_4\times1\right\}.
 \label{eq:42-A11}
\end{align}
The spectrum remaining for the two target-weight-one blocks is
\begin{equation}
 \Gamma_{w=1}
 =\left\{q_1\times1,\ q_2\times6,\ q_3\times1\right\}.
 \label{eq:42-middle-spectrum}
\end{equation}
A shell-by-shell Hamming-symmetric permutation would require $A_{01}$ and
$A_{10}$ to receive the same number of copies of every shell eigenvalue.
Equivalently, every multiplicity in Eq.~\eqref{eq:42-middle-spectrum} would
have to be divisible by two. The multiplicities of $q_1$ and $q_3$ are both
one, so this is impossible.

For example, a capacity-respecting assignment that distributes the six
copies of $q_2$ equally is
\begin{align}
 A_{01}&\leftarrow\left\{q_1,\ q_2\times3\right\},\\
 A_{10}&\leftarrow\left\{q_2\times3,\ q_3\right\}.
 \label{eq:42-asymmetric-blocks}
\end{align}
Its trace difference is
\begin{equation}
 \Tr A_{01}-\Tr A_{10}=q_1-q_3,
 \label{eq:42-trace-difference}
\end{equation}
which is nonzero for $0<P_1<1/2$. Hence this optimal permutation gives
different excited-state populations, and therefore different local
temperatures, to the two targets. Applying the target-only Hadamard $F_2$ on
$\operatorname{span}\{\ket{01},\ket{10}\}$ replaces the two diagonal blocks
by
\begin{equation}
 A'_{01}=A'_{10}=\frac{A_{01}+A_{10}}{2},
 \label{eq:42-hadamard-average}
\end{equation}
while leaving the target-weight-zero and target-weight-two sectors
unchanged. The correction restores equal target temperatures, preserves the
passive target-energy minimum, and adds no work.

\section{Pseudocode for the shell-wise local-balance test}
\label{app:local-balance-pseudocode}
The exact temperature-independent permutation criterion can be tested
independently in each target Hamming-weight sector. This decomposition
is exact because the passive sector totals, ancillary-block
capacities, and shell-wise one-body-balance constraints do not couple
strings of different target weight.

For fixed $w$, define $\mathcal X_w$ and $K_w$ as in
Eq.~\eqref{eq:weight-set}. For every $x\in\mathcal X_w$ and
$k=0,\ldots,N$, introduce
\begin{equation}
 n^{(w)}_{x,k}\in\mathbb N_0,
 \label{eq:sectorwise-variable}
\end{equation}
the number of shell-$k$ eigenvalues assigned to block $A_x$. The sector-$w$ feasibility problem is
\begin{align}
 \sum_{x\in\mathcal X_w}n^{(w)}_{x,k}&=c_{w,k},
 &&k=0,\ldots,N,
 \label{eq:sectorwise-shell-content}\\
 \sum_{k=0}^{N}n^{(w)}_{x,k}&=2^{N-M},
 &&x\in\mathcal X_w,
 \label{eq:sectorwise-block-capacity}\\
 \sum_{x\in\mathcal X_w}(x_\alpha-x_1)n^{(w)}_{x,k}&=0,
 &&\alpha=2,\ldots,M,\quad k=0,\ldots,N.
 \label{eq:sectorwise-one-body-balance}
\end{align}
The last equations compare every target position with target $1$ and
are equivalent to all pairwise balance conditions. The corresponding
test is:

\begin{verbatim}
INPUT: N, M

compute c[w,k] by passive greedy filling
L = 2^(N-M)

for w = 1,...,M-1:
    Xw = {x in {0,1}^M : weight(x)=w}

    create integer variables n[x,k] >= 0
    for x in Xw and k = 0,...,N

    for k = 0,...,N:
        impose sum_{x in Xw} n[x,k] = c[w,k]

    for x in Xw:
        impose sum_{k=0}^N n[x,k] = L

    for k = 0,...,N:
        for alpha = 2,...,M:
            impose
            sum_{x in Xw} (x[alpha]-x[1]) n[x,k] = 0

    solve the integer-feasibility problem

    if infeasible:
        return NOT LOCALLY BALANCED

return LOCALLY BALANCED
\end{verbatim}

The sectors $w=0$ and $w=M$ are omitted because each contains one
target string and is automatically balanced. A feasible solution in
every nontrivial sector specifies a temperature-independent optimal
permutation producing identical target populations for every
$P_1$. Infeasibility in any sector excludes such a permutation, but
not a temperature-specific permutation or the universal
complex-Hadamard construction.

This integer-feasibility formulation is a direct encoding of the
definition of shell-wise one-body balance and does not rely on the
sufficiency proof of Theorem~\ref{thm:exact-local-balance}. A feasible
solution provides an explicit certificate and construction of the
corresponding locally balanced permutation.

\section{Proof of the exact local-balance criterion}
\label{app:exact-local-balance-proof}
We prove Theorem~\ref{thm:exact-local-balance} independently in every
target Hamming-weight sector.

Fix $w\in\{1,\ldots,M-1\}$. Identify each target string $x\in\mathcal X_w$ with the $w$-element subset
\begin{equation}
 E_x=\{\alpha\in\{1,\ldots,M\}:x_\alpha=1\}
 \label{eq:hyperedge-identification}
\end{equation}
of the target positions. The strings in $\mathcal X_w$ are therefore the edges of the complete $w$-uniform hypergraph $K_M^{(w)}$. Figure~\ref{fig:hypergraph-coloring} illustrates this identification for the sector $M=4$, $w=2$ underlying the strict-separation example discussed in the following appendix.

Every target string labels an ancillary block of dimension $L=2^{N-M}$. The available positions in all blocks of sector $w$ can consequently be represented by the complete $w$-uniform multihypergraph $L K_M^{(w)}$, which contains $L$ copies of every edge.

Assign color $k$ to a copy of edge $E_x$ whenever a shell-$k$ eigenvalue is assigned to block $A_x$. The required number of edges of color $k$ is $c_{w,k}$.

\begin{figure}[t]
\centering
\begin{tikzpicture}[scale=1.2,
    vertex/.style={circle,draw,fill=white,minimum size=7mm,inner sep=0pt,font=\small},
    shellA/.style={line width=1.6pt,BrickRed},
    shellB/.style={line width=1.6pt,MidnightBlue},
    ghost/.style={line width=0.8pt,gray!35}]
  \node[vertex] (v1) at (0,2.1) {$1$};
  \node[vertex] (v2) at (2.1,2.1) {$2$};
  \node[vertex] (v3) at (2.1,0) {$3$};
  \node[vertex] (v4) at (0,0) {$4$};

  \draw[shellA] (v1) -- node[above,font=\scriptsize] {$1100$} (v2);
  \draw[shellA] (v2) -- node[right,font=\scriptsize] {$0110$} (v3);
  \draw[shellA] (v3) -- node[below,font=\scriptsize] {$0011$} (v4);
  \draw[shellA] (v4) -- node[left,font=\scriptsize] {$1001$} (v1);
  \draw[shellB] (v1) -- node[pos=.45,above left,font=\scriptsize] {$1010$} (v3);
  \draw[shellB] (v2) -- node[pos=.45,above right,font=\scriptsize] {$0101$} (v4);

  \draw[shellA] (2.75,1.75) -- (3.35,1.75);
  \node[anchor=west,font=\scriptsize] at (3.45,1.75)
    {one balanced color class, $c=4$};
  \draw[shellB] (2.75,1.15) -- (3.35,1.15);
  \node[anchor=west,font=\scriptsize] at (3.45,1.15)
    {remaining edge types};
  \node[anchor=west,align=left,font=\scriptsize] at (2.75,0.45)
    {$M=4$, $w=2$:\\[1pt]
     $d=wc/M=2$ at every vertex};
\end{tikzpicture}
\caption{Schematic hypergraph coloring for $M=4$ and target weight
  $w=2$. The four target positions are vertices, and the six
  weight-two target strings are the edges of $K_4^{(2)}=K_4$. The four
  red edges form a color class of size $c=4$ with degree $d=wc/M=2$ at
  every vertex, illustrating shell-wise one-body balance. The actual
  $(N,M)=(18,4)$ construction uses the same six edge types with the
  much larger shell multiplicities specified in
  Appendix~\ref{app:strict-separation}; those repeated copies are
  suppressed here.}
\label{fig:hypergraph-coloring}
\end{figure}
Necessity follows by counting incidences: color $k$ has total
incidence $wc_{w,k}$, so equal degree at all $M$ target positions
requires $M\mid wc_{w,k}$.

For sufficiency, fix a nontrivial target-weight sector
$w\in\{1,\ldots,M-1\}$ and set $L=2^{N-M}$. The available positions in
its ancillary blocks form the complete $w$-uniform multihypergraph $L
K_M^{(w)}$, containing $L$ copies of every $w$-element subset of the
$M$ target positions. Assign color $k$ to an edge copy whenever the
corresponding block position receives a shell-$k$ eigenvalue. The
prescribed size of color class $k$ is $c_{w,k}$. These sizes are
nonnegative integers and satisfy
\begin{equation}
  \sum_{k=0}^{N}c_{w,k} =
  L\binom{M}{w}, \label{eq:baranyai-total-class-size}
\end{equation}
which is the total number of edges of $L K_M^{(w)}$, counted with
multiplicity. A color class is called \emph{regular} if all vertices
have the same degree, and \emph{almost regular} if the degrees of any
two vertices differ by at most one. An edge coloring is almost regular
if each of its color classes is almost regular. Every permutation of
the $M$ target vertices is an automorphism of $L K_M^{(w)}$. The
automorphism condition expresses the complete symmetry of the target
positions within a fixed-weight sector: exchanging any two target
vertices maps every available edge copy to another available edge copy
with the same multiplicity. This symmetry permits the incidences in
each color class to be redistributed among the target vertices without
changing its prescribed size. Bryant's almost-regular edge-coloring
theorem ~\cite{Bryant2016} therefore yields an edge coloring with the
prescribed class sizes $c_{w,k}$ such that, within every color class,
the degrees of any two vertices differ by at most one. For color $k$,
the total number of vertex incidences is $wc_{w,k}$, because the class
contains $c_{w,k}$ edges, each with $w$ vertices. Its average vertex
degree is therefore
\begin{equation} \overline d_{w,k} =
  \frac{wc_{w,k}}{M}.
\label{eq:baranyai-average-degree} \end{equation}
If $M\mid wc_{w,k}$, this average is an integer. Since the color class
is almost regular, every vertex degree belongs to
\begin{equation}
  \left\{ \left\lfloor\overline d_{w,k}\right\rfloor,
  \left\lceil\overline d_{w,k}\right\rceil \right\}.
\end{equation}
When $\overline d_{w,k}$ is an integer, these two values
coincide. Hence the color class is exactly regular,
and
\begin{equation} \sum_{x\in\mathcal X_w} x_\alpha n^{(w)}_{x,k} =
  \frac{wc_{w,k}}{M} \label{eq:baranyai-exact-balance}
\end{equation}
for every target position $\alpha$. This is precisely shell-wise
one-body balance. Finally, the coloring assigns exactly one color to
every copy of every edge. Since each target string $x\in\mathcal X_w$
corresponds to exactly $L$ edge copies,
\begin{equation}
  \sum_{k=0}^{N}n^{(w)}_{x,k}=L \label{eq:baranyai-block-capacity}
\end{equation}
for every $x\in\mathcal X_w$. Thus all ancillary-block capacities are
satisfied. The construction applies independently in every nontrivial
sector, while the sectors $w=0$ and $w=M$ each contain a single target
string and are automatically balanced. This proves sufficiency and
completes the proof.

\section{Explicit strict separation of the permutation criteria}
\label{app:strict-separation}
The pair $(N,M)=(18,4)$ fails the Hamming-symmetric divisibility test but
admits shell-wise local balance. We first explain the multiplicities appearing
in its target-weight-two sector. Each target string labels an ancillary block
of dimension $2^{18-4}=2^{14}=16384$. The target sectors of weights zero and
one therefore occupy
\begin{equation}
 \left[\binom40+\binom41\right]2^{14}
 =5\times16384=81920
 \label{eq:184-capacity-before-w2}
\end{equation}
positions in the passively ordered spectrum. On the spectral side, the
complete shells $k=0,\ldots,7$ contain
\begin{equation}
 \sum_{k=0}^{7}\binom{18}{k}=63004
 \label{eq:184-shells-zero-seven}
\end{equation}
eigenvalues. Thus the lower-weight target sectors also consume
$81920-63004=18916$ of the $\binom{18}{8}=43758$ copies of $q_8$. The number
of $q_8$ eigenvalues remaining when the weight-two sector begins is therefore
\begin{equation}
 43758-18916=24842.
 \label{eq:184-lower-boundary-count}
\end{equation}
The weight-two sector contains $\binom42=6$ target strings and hence has total
capacity
\begin{equation}
 \binom42 2^{14}=6\times16384=98304.
 \label{eq:184-weight-two-capacity}
\end{equation}
After taking the remaining $24842$ copies of $q_8$ and all
$\binom{18}{9}=48620$ copies of the central shell $q_9$, the sector still has
\begin{equation}
 98304-24842-48620=24842
 \label{eq:184-upper-boundary-count}
\end{equation}
empty positions. These are filled by the first $24842$ copies of $q_{10}$.
Consequently, the passive assignment in the target-weight-two sector is
\begin{equation}
 q_8\times24842,
 \qquad q_9\times48620,
 \qquad q_{10}\times24842.
 \label{eq:184-sector-spectrum}
\end{equation}
The equality of the two boundary multiplicities reflects both the central
position of this target sector and the binomial symmetry
$\binom{18}{8}=\binom{18}{10}$. Since the sector has six blocks, the boundary
multiplicity $24842$ is not divisible by six; hence the shell-by-shell
Hamming-symmetric criterion fails. Nevertheless,
one may assign
\begin{align}
 A_{0011},A_{1100}&\leftarrow\{q_9\times16384\},
 \label{eq:184-A}\\
 A_{0101},A_{1010}&\leftarrow
 \{q_8\times12421,q_9\times3963\},
 \label{eq:184-B}\\
 A_{0110},A_{1001}&\leftarrow
 \{q_9\times3963,q_{10}\times12421\}.
 \label{eq:184-C}
\end{align}
Each shell separately has the same incidence count at every target position.
Thus Eq.~\eqref{eq:shellwise-local-balance} holds for all $P_1$, whereas the
three block traces are unequal for $0<P_1<1/2$.  This proves that the
Hamming-symmetric table is a strict subset of the locally balanced table.

\section{Chernoff--Hoeffding estimate for $\xi_M^\star$}
\label{app:chernoff-hoeffding}
We derive here the lower bound in Eq.~\eqref{eq:hoeffding-xi}. Let
\begin{equation}
 X\sim\operatorname{Bin}\!\left(N,\frac12\right).
 \label{eq:binomial-random-variable}
\end{equation}
The defining condition in Eq.~\eqref{eq:xi-definition} can be written as
\begin{equation}
 \Pr(X<\xi)
 =2^{-N}\sum_{k=0}^{\xi-1}\binom Nk
 \leq 2^{-M}.
 \label{eq:xi-binomial-condition}
\end{equation}
Thus, any integer $\xi$ satisfying
$\Pr(X<\xi)\leq2^{-M}$ is admissible in the maximization defining
$\xi_M^\star$.

For a binomial random variable with mean $N/2$, the Hoeffding lower-tail
bound is \cite{BoucheronLugosiMassart2013}
\begin{equation}
 \Pr\!\left(X-\frac N2\leq-t\right)
 \leq\exp\!\left(-\frac{2t^2}{N}\right),
 \qquad t\geq0.
 \label{eq:hoeffding-lower-tail}
\end{equation}
Choose
\begin{equation}
 t_M=\sqrt{\frac{NM\ln2}{2}}.
 \label{eq:hoeffding-choice-t}
\end{equation}
Then
\begin{equation}
 \exp\!\left(-\frac{2t_M^2}{N}\right)
 =e^{-M\ln2}=2^{-M}.
 \label{eq:hoeffding-two-minus-M}
\end{equation}
Now define
\begin{equation}
 \xi_0=\left\lfloor\frac N2-t_M\right\rfloor.
 \label{eq:hoeffding-xi-zero}
\end{equation}
If $\xi_0\geq1$, the event $\{X<\xi_0\}$ is contained in
$\{X\leq N/2-t_M\}$. Therefore,
\begin{align}
 \Pr(X<\xi_0)
 &\leq\Pr\!\left(X-\frac N2\leq-t_M\right)\nonumber\\
 &\leq e^{-2t_M^2/N}
 =2^{-M}.
 \label{eq:hoeffding-admissibility}
\end{align}
Consequently, $\xi_0$ is admissible in
Eq.~\eqref{eq:xi-binomial-condition}, and maximality of
$\xi_M^\star$ gives
\begin{equation}
 \xi_M^\star(N)
 \geq\xi_0
 =\left\lfloor\frac N2-
 \sqrt{\frac{NM\ln2}{2}}\right\rfloor,
 \label{eq:hoeffding-bound-derived}
\end{equation}
which is Eq.~\eqref{eq:hoeffding-xi}. If $\xi_0\leq0$, the displayed
lower bound is vacuous, whereas $\xi_M^\star\geq1$ by definition.

For fixed $M$ and $N\to\infty$, Eq.~\eqref{eq:hoeffding-bound-derived}
shows that
\begin{equation}
 \xi_M^\star(N)=\frac N2-O(\sqrt N),
 \label{eq:xi-fixed-M-scaling}
\end{equation}
because the opposite estimate $\xi_M^\star(N)\leq N/2+1$ follows from the
symmetry of the binomial distribution. Together with
$\xi_1^\star(\lfloor N/M\rfloor)=N/(2M)+O(1)$, this proves the asymptotic
ratio in Eq.~\eqref{eq:parallel-asymptotic-ratio}.

\section{Worked finite-size examples}
\label{app:worked-examples}

\subsection{$(N,M)=(5,2)$: a Hamming-symmetric permutation}
Each ancillary block has dimension $2^{5-2}=8$. The passive assignment is
\begin{align}
 A_{00}&\leftarrow\{q_0\times1,q_1\times5,q_2\times2\},\\
 A_{01}=A_{10}&\leftarrow\{q_2\times4,q_3\times4\},\\
 A_{11}&\leftarrow\{q_3\times2,q_4\times5,q_5\times1\}.
 \label{eq:52-block-assignment}
\end{align}
Thus the two weight-one blocks have identical shell content and no coherent correction is required. This balanced assignment is displayed in Fig.~\ref{N5M2}. The common population is
\begin{equation}
 p_\star(5,2;P_1)=4q_2+6q_3+5q_4+q_5.
 \label{eq:52-population}
\end{equation}
\begin{figure*}[]
    \centering
    \begin{tikzpicture}[scale=1.3, font=\sffamily, transform shape]
            \def\boxH{2.8} \def\boxW{5.5} \def\sep{6.8}    
            
            \definecolor{p0}{RGB}{68, 1, 84} \definecolor{p1}{RGB}{72, 36, 117}
            \definecolor{p2}{RGB}{38, 110, 142} \definecolor{p3}{RGB}{33, 145, 140}
            \definecolor{p4}{RGB}{159, 218, 58} \definecolor{p5}{RGB}{230, 200, 20}
            \definecolor{violaBase}{RGB}{90, 20, 120} 

            \tikzset{
                m0/.style={circle, fill=p0, minimum size=2.5pt, inner sep=0pt},
                m1/.style={rectangle, fill=p1, minimum size=2.5pt, inner sep=0pt},
                m2/.style={regular polygon, regular polygon sides=3, fill=p2, minimum size=3pt, inner sep=0pt},
                m3/.style={diamond, fill=p3, minimum size=3pt, inner sep=0pt},
                m4/.style={regular polygon, regular polygon sides=5, fill=p4, minimum size=2.8pt, inner sep=0pt},
                m5/.style={star, star points=5, fill=p5, minimum size=3.5pt, inner sep=0pt}
            }

            \draw[fill=violaBase, fill opacity=0.05, draw=black!10] (0,0) rectangle (0.2,\boxH);
            \draw[fill=violaBase, fill opacity=0.15, draw=black!10] (0.2,0) rectangle (1.0,\boxH);
            \draw[fill=violaBase, fill opacity=0.30, draw=black!10] (1.0,0) rectangle (2.75,\boxH);
            \draw[fill=violaBase, fill opacity=0.30, draw=black!10] (2.75,0) rectangle (4.5,\boxH);
            \draw[fill=violaBase, fill opacity=0.15, draw=black!10] (4.5,0) rectangle (5.3,\boxH);
            \draw[fill=violaBase, fill opacity=0.05, draw=black!10] (5.3,0) rectangle (5.5,\boxH);

            \node[m0] at (0.1, 1.4) {};
            \foreach \y in {0.5,0.9,1.3,1.7,2.1} \node[m1] at (0.6, \y) {};
            \foreach \x/\y in {1.5/0.5, 2.1/0.8, 1.8/1.2, 1.3/1.6, 2.3/1.9, 1.6/2.3, 1.9/2.6, 2.5/1.4, 1.2/0.9, 2.4/0.4} \node[m2] at (\x, \y) {};
            \foreach \x/\y in {3.2/0.5, 3.8/0.8, 4.2/1.2, 3.1/1.6, 3.9/1.9, 3.5/2.3, 4.3/2.6, 3.3/1.4, 4.0/0.9, 3.6/0.4} \node[m3] at (\x, \y) {};
            \foreach \y in {0.5,1.0,1.5,2.0,2.5} \node[m4] at (4.9, \y) {};
            \node[m5] at (5.4, 1.4) {};
            \draw[thick] (0,0) rectangle (\boxW, \boxH);
            
            \node[below, scale=0.7] at (0.1, 0) {$\binom{5}{0}$};
            \node[below, scale=0.7] at (0.6, 0) {$\binom{5}{1}$};
            \node[below, scale=0.7] at (1.87, 0) {$\binom{5}{2}$};
            \node[below, scale=0.7] at (3.62, 0) {$\binom{5}{3}$};
            \node[below, scale=0.7] at (4.9, 0) {$\binom{5}{4}$};
            \node[below, scale=0.7] at (5.4, 0) {$\binom{5}{5}$};

            \draw[-{Stealth[scale=1.2]}, ultra thick, gray!20] (\boxW+0.3, \boxH/2) -- (\sep-0.3, \boxH/2);

            \begin{scope}[xshift=\sep cm]
                \draw[fill=cyan, fill opacity=0.05] (0,0) rectangle (1.4,\boxH);
                \draw[fill=cyan, fill opacity=0.15] (1.4,0) rectangle (4.1,\boxH);
                \draw[fill=cyan, fill opacity=0.05] (4.1,0) rectangle (5.5,\boxH);

                \node[cloud, cloud puffs=14, fill=cyan!15, draw=cyan!40, ultra thick, minimum width=1.1cm, minimum height=2.1cm, opacity=0.7] at (0.7, \boxH/2) {};
                \node[m0] at (0.7, 0.7) {}; 
                \foreach \y in {1.0,1.3,1.6,1.9,2.2} \node[m1] at (0.55, \y) {}; 
                \node[m2] at (0.85, 1.2) {}; \node[m2] at (0.85, 1.8) {};

                \node[cloud, cloud puffs=14, fill=cyan!15, draw=cyan!40, ultra thick, minimum width=1.1cm, minimum height=2.1cm, opacity=0.7] at (2.05, \boxH/2) {};
                \foreach \y in {0.8,1.2,1.6,2.0} { \node[m2] at (1.85, \y) {}; \node[m3] at (2.25, \y) {}; }

                \node[cloud, cloud puffs=14, fill=cyan!15, draw=cyan!40, ultra thick, minimum width=1.1cm, minimum height=2.1cm, opacity=0.7] at (3.45, \boxH/2) {};
                \foreach \y in {0.8,1.2,1.6,2.0} { \node[m2] at (3.25, \y) {}; \node[m3] at (3.65, \y) {}; }

                \node[cloud, cloud puffs=14, fill=cyan!15, draw=cyan!40, ultra thick, minimum width=1.1cm, minimum height=2.1cm, opacity=0.7] at (4.8, \boxH/2) {};
                \node[m3] at (4.65, 0.8) {}; \node[m3] at (4.65, 1.2) {};
                \foreach \y in {1.0,1.3,1.6,1.9,2.2} \node[m4] at (4.95, \y) {}; 
                \node[m5] at (4.7, 1.7) {};

                \node[below, scale=0.7] at (0.7, 0) {$\binom{2}{0}$};
                \node[below, scale=0.7] at (2.75, 0) {$\binom{2}{1}$};
                \node[below, scale=0.7] at (4.8, 0) {$\binom{2}{2}$};
                \draw[thick] (0,0) rectangle (\boxW, \boxH);
            \end{scope}
        \end{tikzpicture}
    \caption{Passive shell assignment for $(N,M)=(5,2)$. Left: the thermal shells $q_k=P_0^{5-k}P_1^k$ have multiplicities $\binom{5}{k}$. Right: the ordered eigenvalues fill the target sectors $w=0,1,2$. In the weight-one sector, the blocks $A_{01}$ and $A_{10}$ can receive identical shell content, so a Hamming-symmetric optimal permutation exists.}
    \label{N5M2}
\end{figure*}

\subsection{$(N,M)=(9,3)$: global versus parallel cooling}
Here every ancillary block has dimension $2^6=64$. The passive sector spectra are
\begin{align}
 w=0:&\ \{q_0\times1,q_1\times9,q_2\times36,q_3\times18\},\\
 w=1:&\ \{q_3\times66,q_4\times126\},\\
 w=2:&\ \{q_5\times126,q_6\times66\},\\
 w=3:&\ \{q_6\times18,q_7\times36,q_8\times9,q_9\times1\}.
 \label{eq:93-sector-spectra}
\end{align}
The multiplicities in sectors $w=1,2$ are divisible by three, so a Hamming-symmetric optimal permutation exists. Using Eq.~\eqref{eq:exact-cooling-curve},
\begin{equation}
 p_\star(9,3;P_1)=22q_3+42q_4+84q_5+62q_6+36q_7+9q_8+q_9.
 \label{eq:93-population}
\end{equation}
Figures~\ref{N9M3} and~\ref{N3M1parallel} compare the global and parallel resource allocations. The equal parallel benchmark uses three groups of three qubits and gives $p_\star(3,1;P_1)=3P_1^2-2P_1^3$. The two resulting populations are plotted in Fig.~\ref{N9M3comparison}.
\begin{figure}[h]
    \centering
    \begin{minipage}{0.48\textwidth}
        \centering
        \begin{tikzpicture}[scale=0.9, every node/.style={scale=0.9}]
    \definecolor{lightblue}{RGB}{173, 216, 230}
    \definecolor{lightred}{RGB}{255, 150, 150}
    \def\spacing{0.6}
    \def\circleSize{0.4cm}
    \def\Uwidth{2}
    \def\distIn{1.8}
    \def\distOut{1.8}

    \foreach \i in {0,...,8} {
        \node[circle, draw, fill=violet!80, minimum size=\circleSize, inner sep=0pt, outer sep=0pt] (qin\i) at (0, -\i*\spacing) {};
        \draw[-] (qin\i) -- (\distIn, -\i*\spacing);
    }
    \node[left=0.3cm] at (0, -4*\spacing) {$T$};

    \draw[fill=white!20] (\distIn, 0.3) rectangle (\distIn+\Uwidth, -8*\spacing - 0.3);
    \node at (\distIn+\Uwidth/2, -4*\spacing) {\Large $\mathcal{U}$};

    \foreach \i in {0,...,2} {
        \node[circle, draw, fill=lightblue, minimum size=\circleSize, inner sep=0pt] (qout\i) at (\distIn+\Uwidth+\distOut, -\i*\spacing) {};
        \draw[-] (\distIn+\Uwidth, -\i*\spacing) -- (qout\i);
    }
    \foreach \i in {3,...,8} {
        \node[circle, draw, fill=red, minimum size=\circleSize, inner sep=0pt] (qout\i) at (\distIn+\Uwidth+\distOut, -\i*\spacing) {};
        \draw[-] (\distIn+\Uwidth, -\i*\spacing) -- (qout\i);
    }

    \node[right=0.1cm] at (\distIn+\Uwidth+\distOut, 0.5) {$T'$};
    \draw[decorate, decoration={brace, amplitude=5pt}, thick]
        (\distIn+\Uwidth+\distOut + 0.3, 0.2) -- (\distIn+\Uwidth+\distOut + 0.3, -2*\spacing - 0.2)
        node[midway, xshift=0.9cm] {$M=3$};
    \draw[decorate, decoration={brace, amplitude=5pt}, thick]
        (\distIn+\Uwidth+\distOut + 0.3, -3*\spacing + 0.2) -- (\distIn+\Uwidth+\distOut + 0.3, -8*\spacing - 0.2)
        node[midway, xshift=1.3cm] {$N-M=6$};
        \end{tikzpicture}
        \caption{Global resource allocation for $(N,M)=(9,3)$: one joint protocol uses all nine qubits to cool three targets.}
        \label{N9M3}
    \end{minipage}
    \hfill
    \begin{minipage}{0.48\textwidth}
        \centering
        \begin{tikzpicture}[scale=0.9, every node/.style={scale=0.9}]
    \definecolor{lightblue}{RGB}{173, 216, 230}
    \definecolor{lightred}{RGB}{255, 150, 150}
    \def\spacing{0.6}
    \def\circleSize{0.4cm}
    \def\Uwidth{1.6}
    \def\distIn{1.5}
    \def\distOut{1.5}

    \foreach \k in {0,1,2} {
        \pgfmathsetmacro{\yTop}{-3*\k*\spacing}
        \pgfmathsetmacro{\yMid}{-(3*\k+1)*\spacing}
        \pgfmathsetmacro{\yBot}{-(3*\k+2)*\spacing}
        \pgfmathsetmacro{\yCtr}{-(3*\k+1)*\spacing}
        \pgfmathsetmacro{\blockTop}{\yTop + 0.25}
        \pgfmathsetmacro{\blockBot}{\yBot - 0.25}

        \node[circle, draw, fill=violet!80, minimum size=\circleSize, inner sep=0pt, outer sep=0pt] (inTop\k) at (0, \yTop) {};
        \node[circle, draw, fill=violet!80, minimum size=\circleSize, inner sep=0pt, outer sep=0pt] (inMid\k) at (0, \yMid) {};
        \node[circle, draw, fill=violet!80, minimum size=\circleSize, inner sep=0pt, outer sep=0pt] (inBot\k) at (0, \yBot) {};

        \draw[-] (inTop\k.east) -- (\distIn, \yTop);
        \draw[-] (inMid\k.east) -- (\distIn, \yMid);
        \draw[-] (inBot\k.east) -- (\distIn, \yBot);

        \draw[fill=white!20] (\distIn, \blockTop) rectangle (\distIn+\Uwidth, \blockBot);
        \node at (\distIn+\Uwidth/2, \yCtr) {$\mathcal{U}$};

        \node[circle, draw, fill=lightblue, minimum size=\circleSize, inner sep=0pt, outer sep=0pt] (outTop\k) at (\distIn+\Uwidth+\distOut, \yTop) {};
        \draw[-] (\distIn+\Uwidth, \yTop) -- (outTop\k.west);

        \node[circle, draw, fill=red, minimum size=\circleSize, inner sep=0pt, outer sep=0pt] (outMid\k) at (\distIn+\Uwidth+\distOut, \yMid) {};
        \node[circle, draw, fill=red, minimum size=\circleSize, inner sep=0pt, outer sep=0pt] (outBot\k) at (\distIn+\Uwidth+\distOut, \yBot) {};
        \draw[-] (\distIn+\Uwidth, \yMid) -- (outMid\k.west);
        \draw[-] (\distIn+\Uwidth, \yBot) -- (outBot\k.west);
    }

    \pgfmathsetmacro{\outX}{\distIn+\Uwidth+\distOut}
    \node[left=0.3cm] at (0, -3*\spacing) {$T$};
    \node[right=0.1cm] at (\outX, 0.4) {$T'$};
        \end{tikzpicture}
        \caption{Parallel resource allocation for $(N,M)=(9,3)$: the register is divided into three independent three-qubit single-target protocols.}
        \label{N3M1parallel}
    \end{minipage}
\end{figure}
\begin{figure}
    \centering
    \safeincludegraphics[width=0.5\linewidth]{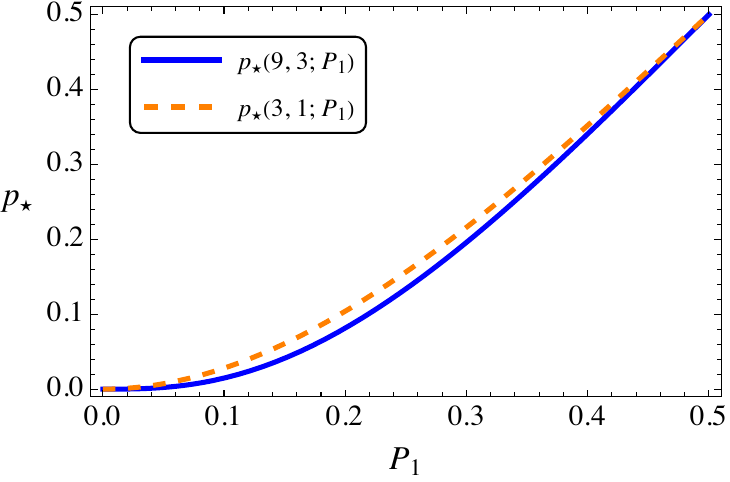}
    \caption{Common target population for $(N,M)=(9,3)$. The global optimum $p_\star(9,3;P_1)$ (blue) lies below the parallel value $p_\star(3,1;P_1)=3P_1^2-2P_1^3$ (orange).}
    \label{N9M3comparison}
\end{figure}

\subsection{$(N,M)=(6,3)$: coherent correction is required}
The sector assignment is given in Eq.~\eqref{eq:63-sector-assignment}. Since the multiplicities $14$ and $10$ are not divisible by three, a Hamming-symmetric optimal permutation does not exist. The complex-Hadamard correction nevertheless produces three identical marginals with population $p_\star(6,3;P_1)$ from Eq.~\eqref{eq:63-cooling-curve}. Figures~\ref{N6M3} and~\ref{N2M1parallel} compare the resource allocations, while Fig.~\ref{N6M3comparison} shows the corresponding populations. The parallel benchmark uses three two-qubit groups and remains at $P_1$.
\begin{figure}[h]
    \centering
    \begin{minipage}{0.48\textwidth}
        \centering
        \begin{tikzpicture}[scale=0.9, every node/.style={scale=0.9}]
            \definecolor{lightblue}{RGB}{173, 216, 230}
            \definecolor{lightred}{RGB}{255, 150, 150}
            \def\spacing{0.6}
            \def\circleSize{0.4cm}
            \def\Uwidth{2}
            \def\distIn{1.8}
            \def\distOut{1.8}
            \foreach \i in {0,...,5} {
                \node[circle, draw, fill=violet!80, minimum size=\circleSize, inner sep=0pt, outer sep=0pt] (qin\i) at (0, -\i*\spacing) {};
                \draw[-] (qin\i) -- (\distIn, -\i*\spacing);
            }
            \node[left=0.3cm] at (0, -2.5*\spacing) {$T$};
            \draw[fill=white!20] (\distIn, 0.3) rectangle (\distIn+\Uwidth, -5*\spacing - 0.3);
            \node at (\distIn+\Uwidth/2, -2.5*\spacing) {\Large $\mathcal{U}$};
            \foreach \i in {0,...,2} {
                \node[circle, draw, fill=lightblue, minimum size=\circleSize, inner sep=0pt] (qout\i) at (\distIn+\Uwidth+\distOut, -\i*\spacing) {};
                \draw[-] (\distIn+\Uwidth, -\i*\spacing) -- (qout\i);
            }
            \foreach \i in {3,...,5} {
                \node[circle, draw, fill=red, minimum size=\circleSize, inner sep=0pt] (qout\i) at (\distIn+\Uwidth+\distOut, -\i*\spacing) {};
                \draw[-] (\distIn+\Uwidth, -\i*\spacing) -- (qout\i);
            }
            \node[right=0.1cm] at (\distIn+\Uwidth+\distOut, 0.4) {$T'$};
            \draw[decorate, decoration={brace, amplitude=5pt}, thick]
                (\distIn+\Uwidth+\distOut + 0.3, 0.2) -- (\distIn+\Uwidth+\distOut + 0.3, -2*\spacing - 0.2)
                node[midway, xshift=0.9cm] {$M=3$};
            \draw[decorate, decoration={brace, amplitude=5pt}, thick]
                (\distIn+\Uwidth+\distOut + 0.3, -3*\spacing + 0.2) -- (\distIn+\Uwidth+\distOut + 0.3, -5*\spacing - 0.2)
                node[midway, xshift=1.3cm] {$N-M=3$};
        \end{tikzpicture}
        \caption{Resource allocation for $(N,M)=(6,3)$. The global protocol has three ancillary qubits and cools all three targets, whereas the parallel benchmark consists of three two-qubit groups and cannot cool.}
        \label{N6M3}
    \end{minipage}
    \hfill
    \begin{minipage}{0.48\textwidth}
        \centering
        \begin{tikzpicture}[scale=0.9, every node/.style={scale=0.9}]
                \definecolor{lightblue}{RGB}{173, 216, 230}
                \definecolor{lightred}{RGB}{255, 150, 150}
                \def\spacing{0.65}
                \def\circleSize{0.4cm}
                \def\Uwidth{1.6}
                \def\distIn{1.5}
                \def\distOut{1.5}

                \foreach \k in {0,1,2} {
                    \pgfmathsetmacro{\yTop}{-2*\k*\spacing}
                    \pgfmathsetmacro{\yBot}{-(2*\k+1)*\spacing}
                    \pgfmathsetmacro{\yMid}{-(2*\k+0.5)*\spacing}
                    \pgfmathsetmacro{\blockTop}{\yTop + 0.25}
                    \pgfmathsetmacro{\blockBot}{\yBot - 0.25}
                    \pgfmathtruncatemacro{\klab}{\k+1}

                    \node[circle, draw, fill=violet!80, minimum size=\circleSize, inner sep=0pt, outer sep=0pt] (inTop\k) at (0, \yTop) {};
                    \node[circle, draw, fill=violet!80, minimum size=\circleSize, inner sep=0pt, outer sep=0pt] (inBot\k) at (0, \yBot) {};

                    \draw[-] (inTop\k.east) -- (\distIn, \yTop);
                    \draw[-] (inBot\k.east) -- (\distIn, \yBot);

                    \draw[fill=white!20] (\distIn, \blockTop) rectangle (\distIn+\Uwidth, \blockBot);
                    \node at (\distIn+\Uwidth/2, \yMid) {$\mathcal{U}$};

                    \node[circle, draw, fill=violet!80, minimum size=\circleSize, inner sep=0pt, outer sep=0pt] (outTop\k) at (\distIn+\Uwidth+\distOut, \yTop) {};
                    \node[circle, draw, fill=violet!80, minimum size=\circleSize, inner sep=0pt, outer sep=0pt] (outBot\k) at (\distIn+\Uwidth+\distOut, \yBot) {};

                    \draw[-] (\distIn+\Uwidth, \yTop) -- (outTop\k.west);
                    \draw[-] (\distIn+\Uwidth, \yBot) -- (outBot\k.west);
                }

                \pgfmathsetmacro{\outX}{\distIn+\Uwidth+\distOut}
                \node[left=0.3cm] at (0, -1.5*\spacing) {$T$};
                \node[right=0.1cm] at (\outX, 0.4) {$T'=T$};

            \end{tikzpicture}
        \caption{The same $N=6$ qubits processed by $3$ independent copies of $\mathcal{U}$, each acting on $2$ qubits and cooling $1$ of them to $T'$. Since it is impossible to cool $1$ qubit given $2$, the parallel combination of $M=1$ unitaries does not cool.}
        \label{N2M1parallel}
    \end{minipage}
\end{figure}
\begin{figure}[h]
    \centering
    \safeincludegraphics[width=0.5\linewidth]{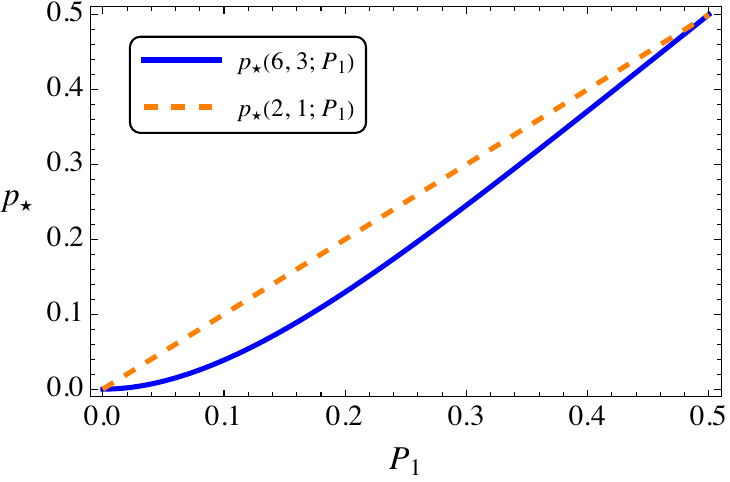}
    \caption{Common target population for $(N,M)=(6,3)$. The global value $p_\star(6,3;P_1)$ (blue) is below the parallel value $p_\star(2,1;P_1)=P_1$ (orange) for $0<P_1<1/2$.}
    \label{N6M3comparison}
\end{figure}

\subsection{$(N,M)=(6,2)$: equality with the parallel benchmark}
For two targets, each ancillary block has dimension $2^4=16$. The passive sector spectra are
\begin{align}
 w=0:&\ \{q_0\times1,q_1\times6,q_2\times9\},\\
 w=1:&\ \{q_2\times6,q_3\times20,q_4\times6\},\\
 w=2:&\ \{q_4\times9,q_5\times6,q_6\times1\}.
 \label{eq:62-sector-spectra}
\end{align}
Equation~\eqref{eq:exact-cooling-curve} gives
\begin{align}
 p_\star(6,2;P_1)
 &=3q_2+10q_3+12q_4+6q_5+q_6\\
 &=3P_1^2-2P_1^3.
 \label{eq:62-derived-equality}
\end{align}
This equals the optimal single-target population for a three-qubit group. Consequently, two parallel three-qubit protocols attain the global optimum, as stated in Eq.~\eqref{eq:62-equality}.

\end{document}